\def\cS{{\mathcal R}}
\def\e{\epsilon}
\def\tr{{\rm Tr}}
\def\N{{\mathbb N}}
\def\R{{\mathbb R}}
\def\ux{{\underline{x}}}
\def\frH{{\mathfrak H}}
\def\cH{{\mathcal H}}
\def\cM{{\mathcal M}}
\def\cN{{\mathcal N}}
\def\cR{{\mathcal R}}
\def\cS{{\mathcal S}}
\def\1{{\bf 1}}
\def\eqnn{\begin{eqnarray*}}
\def\eeqnn{\end{eqnarray*}}
\def\eqn{\begin{eqnarray}}
\def\eeqn{\end{eqnarray}}
\def\prf{\begin{proof}}
\def\endprf{\end{proof}}
\theoremstyle{plain}
\newtheorem{theorem}{Theorem}[section]
\newtheorem{lemma}[theorem]{Lemma}
\numberwithin{equation}{section}
\begin{document}

\parskip=8pt

\title[Well-posedness and scattering for the GP hierarchy]
{On the well-posedness and scattering for the Gross-Pitaevskii hierarchy via quantum de Finetti}

\author[T. Chen]{Thomas Chen}
\address{T. Chen,  
Department of Mathematics, University of Texas at Austin.}
\email{tc@math.utexas.edu}

\author[C. Hainzl]{Christian Hainzl}
\address{C. Hainzl,
Fachbereich Mathematik,  Universit\"at T\"ubingen, Germany.}
\email{christian.hainzl@uni-tuebingen.de}

\author[N. Pavlovi\'{c}]{Nata\v{s}a Pavlovi\'{c}}
\address{N. Pavlovi\'{c},  
Department of Mathematics, University of Texas at Austin.}
\email{natasa@math.utexas.edu}

\author[R. Seiringer]{Robert Seiringer}
\address{R. Seiringer,
Institute of Science and Technology Austria (IST Austria).}
\email{robert.seiringer@ist.ac.at}

\maketitle

\begin{abstract}
We prove the existence of scattering states for the defocusing cubic 
Gross-Pitaevskii (GP) hierarchy in $\R^3$.
Moreover, we show that an exponential energy growth condition commonly 
used in the well-posedness theory
of the GP hierarchy is, in a specific sense, necessary.
In fact, we prove that without the latter, there exist initial data for the 
focusing cubic GP hierarchy for which instantaneous blowup occurs.
\end{abstract}

\section{Introduction} 
\label{sec-mainres-1}

The cubic Gross-Pitaevskii (GP) hierarchy is a system of infinitely many coupled linear PDE's
describing a Bose gas of infinitely many
particles, interacting via two-body delta interactions (repulsive in the defocusing case, and 
attractive  in the focusing case). It emerges in the derivation of the nonlinear Schr\"odinger equation (NLS) from
a bosonic $N$-particle Schr\"odinger system in the limit  as $N\rightarrow\infty$, 
where the pair interaction potentials 
tend to a delta distribution.
In this paper, we prove the existence of scattering states for solutions to the defocusing cubic 
GP hierarchy in $\R^3$. Moreover, we show that an exponential energy growth condition commonly 
used in the well-posedness theory
of the GP hierarchy is, in a specific sense, necessary.

Our approach uses the {\em quantum de Finetti theorem} as presented in the work
of Lewin, Nam and Rougerie \cite{lnr} (see Section \ref{ssec-deFinetti-1}).  
We previously applied it in \cite{chpa} to give a new, short proof of the unconditional uniqueness of solutions
to the cubic GP hierarchy in the energy space. The quantum de Finetti theorem allows
us to lift a variety of results that hold for the corresponding NLS to the GP hierarchy. 
In particular, we use this approach in the work at hand
to establish the existence of scattering states for the cubic defocusing
GP hierarchy in $\R^3$.

Another main goal of this paper is to illuminate an important 
exponential energy growth condition that is 
invoked in all works on the well-posedness
of the GP hierarchy equations in the literature. We show that if this condition is 
removed, the focusing GP hierarchy equations
become ill-posed. Again, the de Finetti theorem allows us to access this
previously elusive problem by relating it to the blowup in $H^1$  of solutions to the corresponding
focusing cubic NLS.


The first derivation of the nonlinear Hartree (NLH) equation from an interacting Bose gas
was given by Hepp in \cite{he}, via second quantization and coherent states.  
Lanford, in his fundamental analysis of the $N\rightarrow\infty$ limit of $N$-particle
systems in classical mechanics, made central use of the BBGKY hierarchy \cite{Lan-1,Lan-2}.
The latter was subsequently employed by Spohn for a different derivation of the NLH, 
in \cite{sp}. Fr\"ohlich, Tsai and Yau revisited this topic more recently in \cite{frtsya}.
Subsequently, Erd\"os, Schlein and Yau gave the derivation of 
the NLS and NLH for a wide range of situations in their landmark works \cite{esy1,esy2,esy3,esy4}. 
In their approach, proving the {\em uniqueness of solutions} to the GP hierarchy in a space of marginal density matrices
$L^\infty_{t\in[0,T)}\frH^1$ (defined in \eqref{frakH-def-1} below) is a crucial ingredient.   
Their approach involves sophisticated singular integral estimates organized with 
Feynman graph expansions, and introduces an important combinatorial 
method that controls the large number of such graphs.

Subsequently, by combining a reformulation of the combinatorial method of \cite{esy1,esy2,esy3,esy4}
with methods from the theory of dispersive PDE's, 
Klainerman and Machedon \cite{KM}  gave a shorter proof of uniqueness of solutions
in a different solution space,
but under the assumption of an a priori condition on the solutions.  
Their approach was used by various authors for the derivation of the NLS 
from interacting Bose gases 
\cite{chpa,CPBBGKY,xch3,CheHol-2013,kiscst,CT1,zxie}.
The analysis of the Cauchy problem for the GP hierarchy was initiated in \cite{chpa2} and continued e.g. in \cite{GreSohSta-2012,CT1}. 

In \cite{CHPS-1}, we gave a new proof of unconditional uniqueness for solutions to the cubic GP hierarchy 
in $\R^3$. Our result is equivalent to the uniqueness result in \cite{esy1,esy2,esy3,esy4}; the proof
combines the Erd\"os-Schlein-Yau combinatorial method \cite{esy1,esy2,esy3,esy4} 
in boardgame formulation \cite{KM}, with an application of the {\em quantum de Finetti theorem} \cite{lnr},
see Section \ref{ssec-deFinetti-1}.

There exists a variety of different 
approaches to the derivation of the NLS and NLH
from many-body quantum dynamics, due to the contributions of many authors; we refer to 
\cite{esy1,esy2,esy3,esy4,ey,kiscst,rosc} and the references therein,
and also \cite{adgote,anasig,frgrsc,frknpi,frknsc,grma,grmama,he,pick,pick2}.
These dispersive nonlinear PDE's give a mean field description of the dynamics of Bose-Einstein condensates, 
whose formation was first experimentally verified in 1995,
\cite{anenmawico,dameandrdukuke}. For the mathematical study of Bose-Einstein condensation, 
we refer to \cite{ailisesoyn,lise,lisesoyn,liseyn, liseyn2} and the references therein.

\subsection{Definition of the GP hierarchy}

The cubic defocusing GP hierarchy on $\R^3$  for an infinite sequence of 
bosonic
marginal density matrices $\Gamma=(\gamma^{(k)})_{k\in\N}$ is defined as the initial value problem
\eqn \label{eq-def-GP}
	i\partial_t \gamma^{(k)} =\sum_{j=1}^k [-\Delta_{x_j},\gamma^{(k)}]   
	\, + \,   \lambda B_{k+1} \gamma^{(k+1)}
	\nonumber\\
	\gamma^{(k)}(0)=\gamma_0^{(k)} \;,
	\;\;\; k\in\N \;,
\eeqn
where $\lambda\in\{1,-1\}$, and
where $\gamma^{(k)}(t;\ux_k;\ux_k')$ is fully symmetric under permutations separately of the
components of $\ux_k:=(x_1,\dots,x_k)$, and of the components of $\ux_k':=(x_1',\dots,x_k')$. 
We call \eqref{eq-def-GP} {\em defocusing} if $\lambda=1$, and {\em focusing} if $\lambda=-1$.
The interaction term for the $k$-particle marginal is defined by
\eqn \label{eq-def-b}
	B_{k+1}\gamma^{(k+1)}
	\, = \, B^+_{k+1}\gamma^{(k+1)}
        - B^-_{k+1}\gamma^{(k+1)} \, ,
\eeqn
where 
\eqn\label{eq-Bplus-GP-def-1}
    B^+_{k+1}\gamma^{(k+1)}
     = \sum_{j=1}^k B^+_{j;k+1 }\gamma^{(k+1)},
\eeqn
and 
\eqn 
    B^-_{k+1}\gamma^{(k+1)}
    = \sum_{j=1}^k B^-_{j;k+1 }\gamma^{(k+1)},
\eeqn                  
with 
\begin{align}\label{eq-Bplusmin-def-1-1}
    & \left(B^+_{j;k+1}\gamma^{(k+1)}\right)(t,x_1,\dots,x_k;x_1',\dots,x_k') \nonumber\\
    & \quad \quad = \int dx_{k+1}  dx_{k+1}'  \nonumber\\
    & \quad\quad\quad\quad 
	\delta(x_j-x_{k+1})\delta(x_j-x_{k+1}' )
        \gamma^{(k+1)}(t,x_1,\dots,x_{k+1};x_1',\dots,x_{k+1}') \nonumber\\
    & \quad \quad =  
        \gamma^{(k+1)}(t,x_1,\dots,x_j,\dots,x_k,x_j;x_1',\dots,x_k',x_j),
\end{align} 
and 
\begin{align}\label{eq-Bplusmin-def-1-2}
    & \left(B^-_{j;k+1}\gamma^{(k+1)}\right)(t,x_1,\dots,x_k;x_1',\dots,x_k') \nonumber\\
    & \quad \quad = \int dx_{k+1} dx_{k+1}'  \nonumber\\
    & \quad\quad\quad\quad 
	  \delta(x'_j-x_{k+1})\delta(x'_j-x_{k+1}' )
        \gamma^{(k+1)}(t,x_1,\dots,x_{k+1};x_1',\dots,x_{k+1}')\nonumber\\
    & \quad \quad =  
        \gamma^{(k+1)}(t,x_1,\dots,x_k,x_j';x_1',\dots,x_j',\dots,x_k',x_j') \,.
\end{align} 
We say that $B^+_{j;k+1}$ {\em contracts} the triple of variables $x_j,x_{k+1},x_{k+1}'$, and that
$B^-_{j;k+1}$ contracts the triple of variables $x_j',x_{k+1},x_{k+1}'$.

In \cite{esy1,esy2,esy3,esy4} and \cite{CHPS-1},  the 
well-posedness of \eqref{eq-def-GP}  is studied in the space of solutions
\begin{align}\label{frakH-def-1}
 	\frH^1:=\Big\{ \,(\gamma^{(k)})_{k\in\N} \, \Big| \, 
 	\tr(| S^{(k,1)}  [\gamma^{(k)}] |) < R^{2k} \; \mbox{for some constant }R<\infty \, \Big\}
\end{align}
where $S^{(k,\alpha)}:=\prod_{j=1}^k(1-\Delta_{x_j})^{\alpha/2}(1-\Delta_{x_j'})^{\alpha/2}$ for $\alpha>0$.

We write
\eqn
     U^{(k)}(t) := \prod_{\ell=1}^k e^{it(\Delta_{x_\ell}-\Delta_{x_\ell'})}
\eeqn
for the free $k$-particle propagator.
A {\em mild solution} to \eqref{eq-def-GP} in the space $L^\infty_{t\in[0,T]}\frH^1$ 
is a sequence of marginal density matrices $\Gamma=(\gamma^{(k)}(t))_{k\in\N}$ 
solving the integral equation
\eqn
    \gamma^{(k)}(t) = U^{(k)}(t)\gamma^{(k)}(0) + i \int_0^t  U^{(k)}(t-s) B_{k+1}\gamma^{(k+1)}(s) ds \,
    \;\;\;,\;\;\; k\in\N\,,
\eeqn
satisfying  
\eqn
    \sup_{t\in[0,T]}\tr(|S^{(k,1)}[\gamma^{(k)}(t)]|) < R^{2k}
\eeqn
for a finite constant $R$ independent of $k$. 
 
\subsection{The cubic NLS} 

In the special case of factorized initial data,
\eqn\label{eq-gammak-factorized-1}
	\gamma^{(k)}_0(\ux_k;\ux_k') \, = \, \prod_{j=1}^k \phi_0(x_j) \, \overline{\phi_0(x_j')} \,,
\eeqn
the condition that $(\gamma_0^{(k)})\in\frH^1$ implies
\eqn 
    \tr(|S^{(k,1)}[\gamma_0^{(k)}] |) = \|\phi_0\|_{H^1}^{2k} < R^{2k} 
    \;\;\;,\;\;\;k\in\N\,,
\eeqn
and is equivalent to the condition $\|\phi_0\|_{H^1} < R$.
A particular solution to  \eqref{eq-def-GP} with initial data \eqref{eq-gammak-factorized-1} is given by
$\Gamma=(\gamma^{(k)}(t))_{k\in\N}$ where for all $k\in\N$,
\eqn
	\gamma^{(k)}(t;\ux_k;\ux_k') \,  =  \,
	\prod_{j=1}^{k} \phi(t,x_j) \, \overline{ {\phi}(t,x'_j) } 
\eeqn
is factorized.  
In particular, the 1-particle wave function $\phi$ satisfies the  
cubic NLS 
\eqn\label{eq-NLS-def-1}
    i\partial_t\phi(t)=-\Delta\phi(t)+ \lambda|\phi(t)|^2\phi(t) \;\;\;,\;\;\;\phi(0)=\phi_0\in H^1\,,
\eeqn 
which is defocusing if $\lambda=1$ and focusing if $\lambda=-1$.

Solutions to \eqref{eq-NLS-def-1} conserve the $L^2$-mass 
\eqn
    M[\phi(t)] =\|\phi(t)\|_{L^2_x}^2 = M[\phi_0] \,,
\eeqn 
the momentum
\eqn 
    P[\phi(t)] = i \int \overline{\phi(t,x)}\nabla\phi(t,x) dx \,,
\eeqn
angular momentum
\eqn 
    L[\phi(t)] = i\int \overline{\phi(t,x)}\,x\wedge\nabla\phi(t,x) dx \,,
\eeqn
and the energy
\eqn
    E[\phi(t)] = \frac12\| \nabla_x \phi(t) \|_{L^2_x}^2 + \frac\lambda4\|\phi(t)\|_{L^4_x}^4 = E[\phi_0] \,.
\eeqn
The cubic NLS in $\R^3$ \eqref{eq-NLS-def-1} is $L^2$-supercritical and $H^1$-subcritical,
and is globally well-posed in $H^1$ if $\lambda=1$,  
and locally well-posed if $\lambda=-1$, \cite{TaoBook}. 

\subsubsection{The defocusing NLS}

In the defocusing case $\lambda=1$, \eqref{eq-NLS-def-1} is globally well-posed and 
displays the existence of scattering states
and asymptotic completeness:

\begin{theorem}\label{thm-NLSscattering-1}
Let $S_t:\phi_0\mapsto \phi(t)$ denote the flow map associated to \eqref{eq-NLS-def-1}, 
for $t\in\R$ and $\lambda=1$.
Then, there exist  continuous bijections (wave operators) 
$W_+,W_-:H^1(\R^3)\rightarrow H^1(\R^3)$,  such that
the strong limit 
\eqn    
    \lim_{t\rightarrow\pm\infty}e^{-it\Delta}S_t(\phi_0) = \phi_\pm
    \;\;\;\; , \;\;\;\; \phi_0=W_\pm(\phi_\pm)
\eeqn
holds for all $\phi_0\in H^1(\R^3)$.
\end{theorem}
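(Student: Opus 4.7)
The result is the classical scattering theorem for the defocusing cubic NLS in $\R^3$, due to Ginibre and Velo. Since the equation is $\dot H^{1/2}$-critical in $\R^3$, hence $H^1$-subcritical, local well-posedness in $H^1$ follows from a standard Strichartz-based contraction argument applied to the Duhamel formulation. For $\lambda=1$, both conserved quantities $M[\phi(t)]$ and $E[\phi(t)]$ are nonnegative, so $\|\phi(t)\|_{H^1}$ is uniformly controlled by $\|\phi_0\|_{H^1}$ on the maximal existence interval, which therefore equals all of $\R$. This produces a well-defined global flow map $S_t:H^1(\R^3)\to H^1(\R^3)$.

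The crucial step is to upgrade global existence to a global-in-time spacetime bound via a Morawetz-type identity. Either the Lin--Strauss Morawetz estimate or, more robustly, the interaction Morawetz estimate of Colliander--Keel--Staffilani--Takaoka--Tao yields
\[
    \|\phi\|_{L^4_tL^4_x(\R\times\R^3)} \le C(\|\phi_0\|_{H^1}),
\]
with constant independent of the length of the time interval. This is the unique step in which the defocusing sign is genuinely used: the relevant virial-type boundary term has favorable sign precisely when $\lambda=1$. From this $L^4_{t,x}$ bound, a standard Strichartz bootstrap (partitioning $\R$ into finitely many subintervals on each of which the nonlinearity has small Strichartz norm) propagates the estimate to full control of $\phi$ and $\nabla\phi$ in every $H^1$-admissible Strichartz norm.

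With global Strichartz control in hand, one defines
\[
    \phi_\pm := \phi_0 - i \int_0^{\pm\infty} e^{-is\Delta} |\phi(s)|^2\phi(s) \, ds,
\]
and shows, via Strichartz estimates combined with the $L^4_{t,x}$ bound, that the integral converges in $H^1(\R^3)$ and that $e^{-it\Delta}S_t(\phi_0)\to\phi_\pm$ strongly in $H^1$ as $t\to\pm\infty$. This defines the inverse wave operators $W_\pm^{-1}:\phi_0\mapsto\phi_\pm$, whose continuity follows from continuous dependence of Strichartz norms on the $H^1$ initial data. For bijectivity, one fixes $\phi_\pm\in H^1$ and constructs the corresponding $\phi_0$ by solving an integral equation in a Strichartz space on a half-line $[T_0,\pm\infty)$ with $|T_0|$ large, using the dispersive smallness of $e^{it\Delta}\phi_\pm$ there to close a contraction, then extending back to $t=0$ via the global well-posedness of Step 1. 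The principal technical obstacle is the global Morawetz bound; all subsequent steps are routine Strichartz scattering theory.
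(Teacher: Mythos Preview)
Your sketch is correct and is exactly the classical Ginibre--Velo/Morawetz scattering argument for the defocusing cubic NLS in $\R^3$. The paper does not supply its own proof of this theorem; it simply quotes the result and refers to Section~3.6 of Tao's book \cite{TaoBook}, which contains precisely the argument you outlined (local well-posedness by Strichartz contraction, global existence from conservation laws, a Morawetz estimate yielding the global $L^4_{t,x}$ bound, a Strichartz bootstrap, and the standard construction of wave operators).
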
 

We refer  to Section 3.6 in \cite{TaoBook} for a detailed discussion and a proof.

\subsection{The quantum de Finetti theorem}
\label{ssec-deFinetti-1}

As shown in our recent work \cite{CHPS-1},
solutions to the GP hierarchy and solutions to the NLS are closely interconnected via the 
{\em quantum de Finetti theorem}, which 
is a quantum analogue of the Hewitt-Savage theorem in probability theory, \cite{HewittSavage}.
We quote it in the formulation presented by Lewin, Nam and Rougerie in \cite{lnr} who coined the 
notions of the strong and weak quantum de Finetti theorems (here collected into a single theorem).

\begin{theorem}\label{thm-strongDeFinetti-1}
Let $\cH$ be a separable Hilbert space and let 
$\cH^k = \bigotimes_{sym}^k\cH$ denote the corresponding bosonic $k$-particle space. 
Let $\Gamma$ denote a collection of   
bosonic density matrices on  $\cH$, i.e.,
$$
\Gamma = (\gamma^{(1)},\gamma^{(2)},\dots)
$$
with $\gamma^{(k)}$ a non-negative trace class operator on $\cH^k$.
Then, the following hold:
\begin{itemize}
\item
(Strong Quantum de Finetti theorem, \cite{HudsonMoody,Stormer-69,lnr})
Assume that $\Gamma$ is admissible, i.e., $\gamma^{(k)}=\tr_{k+1} \gamma^{(k+1)}$, 
where $\tr_{k+1}$ denotes the partial trace over the $(k+1)$-th factor, $\forall k\in\N$. 
Then, there exists a unique Borel probability measure $\mu$, 
supported on the unit sphere in $\cH$, and invariant under multiplication of 
$\phi \in \cH$ by complex numbers of modulus one, such that 
\begin{equation}\label{gkdf}
    \gamma^{(k)} = \int d\mu(\phi)  (  | \phi  \rangle \langle \phi |  )^{\otimes k}
    \;\;\;,\;\;\;\forall k\in\N\,.
\end{equation} 
\item 
(Weak Quantum de Finetti theorem, \cite{lnr,AmmariNier-2008,AmmariNier-2011}) 
Assume that $\gamma_N^{(N)}$ is an arbitrary sequence of mixed states on $\cH^N$, $N\in\N$,
satisfying $\gamma_N^{(N)}\geq 0$ and $\tr_{\cH^N}(\gamma_N^{(N)})=1$, and assume
that its $k$-particle marginals have weak-* limits 
\eqn 
    \gamma^{(k)}_{N}:=\tr_{k+1,\cdots,N}(\gamma^{(N)}_N)
    \; \rightharpoonup^* \; \gamma^{(k)} \;\;\;\; (N\rightarrow\infty)\,,
\eeqn
in the trace class on $\cH^k$ for all $k\geq1$ (here, $\tr_{k+1,\cdots,N}(\gamma^{(N)}_N)$ 
denotes the partial trace in the $(k+1)$-st up to $N$-th component). 
Then, there exists a unique Borel probability measure $\mu$ on the unit ball in
$\cH$, and invariant under multiplication of 
$\phi \in \cH$ by complex numbers of modulus one, 
such that
\eqref{gkdf} holds
for all $k\geq0$. 
\end{itemize}
\end{theorem}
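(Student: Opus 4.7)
The plan is to handle the strong and weak versions separately. The strong version reduces to an operator analogue of the Hewitt-Savage theorem due to St{\o}rmer and Hudson-Moody, while the weak version requires a coherent-state construction to absorb the possible loss of mass under weak-$*$ limits.

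For the strong version, I would first observe that the admissibility condition $\gamma^{(k)}=\tr_{k+1}\gamma^{(k+1)}$ is exactly the Kolmogorov compatibility needed to assemble the $\gamma^{(k)}$ into a single state $\omega$ on the quasi-local $C^*$-algebra $\cM_\infty:=\overline{\bigcup_k\cB(\cH^k)}$, with $\omega|_{\cB(\cH^k)}(A)=\tr(\gamma^{(k)}A)$. Bosonic symmetry of each $\gamma^{(k)}$ makes $\omega$ invariant under the natural action of the infinite permutation group. The classical result of St{\o}rmer then identifies the symmetric states on $\cM_\infty$ as a Choquet simplex whose extreme points are precisely the product states $\omega_\phi$ associated with $(|\phi\rangle\langle\phi|)^{\otimes\infty}$ for unit vectors $\phi\in\cH$. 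This yields a unique probability measure $\mu$ on the unit sphere, automatically $U(1)$-invariant since $\omega_{e^{i\theta}\phi}=\omega_\phi$, satisfying $\omega=\int d\mu(\phi)\,\omega_\phi$. Restricting back to $\cB(\cH^k)$ produces \eqref{gkdf}.

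For the weak version, the trace may drop in the weak-$*$ limit, so the measure has to live on the closed unit ball. I would follow the truncation strategy of Lewin-Nam-Rougerie: choose an increasing exhaustion $\{V_n\}$ of $\cH$ by finite-dimensional subspaces with projections $P_n$, and consider the compressed states $P_n^{\otimes N}\gamma_N^{(N)}P_n^{\otimes N}$, which are symmetric on the finite-dimensional bosonic space $V_n^{\otimes_{\rm sym}N}$. On that space, the resolution of the identity by $SU(\dim V_n)$ coherent states gives an explicit finite-dimensional de Finetti representation with quantitative error $O(\dim V_n/N)$. Sending $N\to\infty$ produces, for each $n$, a probability measure $\mu_n$ on the unit ball of $V_n$ that represents the compressed marginals of the weak-$*$ limit. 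Since the closed unit ball of $\cH$ is weakly compact and metrizable, $\{\mu_n\}$ is tight and a diagonal subsequence converges weakly to a measure $\mu$ on the ball satisfying \eqref{gkdf}.

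The main obstacle is uniqueness and, closely related, the consistency of these approximants: a priori different exhaustions $\{V_n\}$ could produce different limits. I would settle this by proving that the moment map $\mu\mapsto\bigl(\int d\mu(\phi)(|\phi\rangle\langle\phi|)^{\otimes k}\bigr)_{k\ge 1}$ is injective on $U(1)$-invariant Borel measures on the unit ball of $\cH$. This is a Hausdorff-type moment problem in infinitely many variables, which succeeds because the polynomials in $\langle e_i,\phi\rangle$ and $\overline{\langle e_i,\phi\rangle}$ (for an orthonormal basis $\{e_i\}$) are dense in the $U(1)$-invariant continuous functions on the closed ball, and their expectations are exactly read off the $\gamma^{(k)}$. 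Uniqueness then forces compatibility across the $V_n$ and selects a limiting measure independent of the exhaustion.
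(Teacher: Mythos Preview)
The paper does not prove this theorem. It is explicitly quoted from the literature (St{\o}rmer, Hudson--Moody, Lewin--Nam--Rougerie, Ammari--Nier) as a tool, with the sentence ``We quote it in the formulation presented by Lewin, Nam and Rougerie in \cite{lnr}.'' There is therefore no proof in the paper to compare your proposal against.

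That said, your sketch is a faithful outline of the arguments in the cited references: the strong version does go through St{\o}rmer's identification of symmetric states as a simplex with product extreme points, and the weak version in \cite{lnr} does proceed by finite-dimensional localization combined with a quantitative de Finetti estimate (of Christandl--K\"onig--Mitchison--Renner type), followed by a tightness/compactness argument on the weakly compact unit ball. Your uniqueness argument via the Hausdorff moment problem is also the standard one. One small point to be careful about in the strong case: the assembly into a state on the quasi-local algebra requires the full sequence of bosonic $\gamma^{(k)}$ with the compatibility condition, and the extreme-point identification in St{\o}rmer's theorem is for states on the infinite tensor product $C^*$-algebra, so one has to check that restricting attention to the bosonic sectors and to normal (density-matrix) states does not lose anything---this is handled in Hudson--Moody. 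But as a high-level roadmap your proposal is correct; it simply has nothing in this paper to be compared with.
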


We note that the limiting hierarchies of marginal density matrices obtained via weak-* limits from
the BBGKY hierarchy of bosonic $N$-body 
Schr\"odinger systems as in \cite{esy1,esy2,esy3,esy4} do not necessarily satisfy admissibility.

For the problems considered in this paper, the Hilbert space is given by $\cH=L^2(\R^3)$.  
In  \cite{CHPS-1}, we have used Theorem \ref{thm-strongDeFinetti-1} to present a new, shorter proof of the
{\em unconditional}  uniqueness of solutions to the GP hierarchy
in $L^\infty_{t\in[0,T)}\frH^1$; we thereby also obtain a direct correspondence between solutions
to the GP hierarchy and solutions to the NLS which will be crucial for our proof of the main results in this paper.
The unconditional uniqueness part itself is equivalent to the uniqueness result proven in
\cite{esy1,esy2,esy3,esy4}.
Our main result in  \cite{CHPS-1} states the following.

\begin{theorem}\label{thm-uniqueness-1}  (Chen-Hainzl-Pavlovi\'c-Seiringer, \cite{CHPS-1})
Let $(\gamma^{(k)}(t))_{k\in\N}$ be a mild solution  in $L^\infty_{t\in[0,T)}\frH^1$ 
to the (de)focusing cubic GP hierarchy in $\R^3$ with initial data $(\gamma^{(k)}(0))_{k\in\N}\in\frH^1$, 
which is either admissible, or obtained
at each $t$ from a weak-* limit as described in  Theorem \ref{thm-strongDeFinetti-1}.  

Then, $(\gamma^{(k)})_{k\in\N}$ is the unique solution for the given initial data.

Moreover, assume that the initial data $(\gamma^{(k)}(0))_{k\in\N} \in\frH^1$ satisfy
\eqn 
    \gamma^{(k)}(0) = \int d\mu(\phi)(|\phi\rangle\langle\phi|)^{\otimes k} 
    \;\;\;,\;\;\;\forall k\in\N\,,
\eeqn   
where $\mu$ is a Borel probability measure
supported  either  on the unit sphere or on the unit ball in $L^2(\R^3)$,
and invariant under multiplication of 
$\phi \in \cH$ by complex numbers of modulus one.  
Then, 
\eqn\label{eq-GPdeF-NLS-sol-1}
    \gamma^{(k)}(t) = \int d\mu(\phi)(|S_t(\phi)\rangle\langle S_t(\phi)|)^{\otimes k} 
    \;\;\;,\;\;\;\forall k\in\N\,,
\eeqn
where $S_t:\phi\mapsto \phi(t)$ is the flow map of the cubic (de)focusing NLS, for $t\in[0,T)$. That is, $\phi(t)$
satisfies \eqref{eq-NLS-def-1} with initial data $\phi$.
Accordingly,
\eqn\label{eq-GPdeF-pullb-sol-1}
    \gamma^{(k)}(t) = \int d\mu_t(\phi)(|\phi\rangle\langle\phi|)^{\otimes k} 
    \;\;\;,\;\;\;\forall k\in\N\,,
\eeqn
where $d\mu_t(\phi):=d\mu(S_{-t}(\phi))$ is the push-forward measure under the NLS flow.
\end{theorem}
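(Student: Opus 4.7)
The plan is to establish the two parts of the theorem jointly. First I would exhibit an explicit candidate solution $\tilde\Gamma=(\tilde\gamma^{(k)})$ built from the NLS flow, and then I would use the combinatorial machinery of Erd\"os-Schlein-Yau in boardgame form \cite{KM}, combined with the de Finetti representation, to show that any mild solution in $L^\infty_t\frH^1$ satisfying the admissibility or weak-* hypothesis must coincide with it.

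\textbf{Construction.} From the de Finetti hypothesis on the initial data and the bound $\tr(|S^{(k,1)}[\gamma^{(k)}(0)]|)<R^{2k}$, one obtains $\int d\mu(\phi)\|\phi\|_{H^1}^{2k}<R^{2k}$; taking $k$-th roots and $k\to\infty$ forces $\mu$ to be supported in $\{\|\phi\|_{H^1}\leq R\}$. I would then define
\begin{equation*}
\tilde\gamma^{(k)}(t) := \int d\mu(\phi)\,(|S_t\phi\rangle\langle S_t\phi|)^{\otimes k}.
\end{equation*}
A short commutator computation shows that for each $\phi$ in the support of $\mu$, the factorized tower $(|S_t\phi\rangle\langle S_t\phi|)^{\otimes k}$ solves \eqref{eq-def-GP}: on each of the $k$ factors, the kinetic commutator and the contraction $\lambda B_{k+1}$ combine into one copy of the cubic NLS \eqref{eq-NLS-def-1}. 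Integrating against $\mu$ then yields that $\tilde\Gamma$ is a mild solution, whose $L^\infty_t\frH^1$ membership follows from the $H^1$ bound on $S_t\phi$, which is global for $\lambda=1$ and holds on a common short interval $[0,\tau(R))$ for $\lambda=-1$.

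\textbf{Uniqueness.} Let $(\gamma^{(k)}(t))$ be any mild solution satisfying the admissibility or weak-* hypothesis. Applying Theorem \ref{thm-strongDeFinetti-1} at each fixed $t$ yields a measure $\mu_t$ with $\gamma^{(k)}(t)=\int d\mu_t(\phi)(|\phi\rangle\langle\phi|)^{\otimes k}$, and the $\frH^1$ bound again forces $\mu_t$ to be supported in $\{\|\phi\|_{H^1}\leq R\}$. I would iterate the Duhamel formula $n$ times and reorganize the resulting terms via the Klainerman-Machedon boardgame reduction. Inserting the de Finetti representation at the innermost level $k+n$ converts each boardgame term into a $\mu_s$-superposition of purely factorized tensor products, which match term-by-term with the iterated Duhamel expansion of the NLS solution contributing to $\tilde\gamma^{(k)}(t)$. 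On a short interval $[0,\tau]$ the remainder is controlled by the Klainerman-Machedon multilinear estimates on $B_{k+1}$ together with the uniform $\frH^1$ bound: the $R^{2k}$ growth is absorbed by the boardgame combinatorial gain, so the remainder vanishes as $n\to\infty$ and $\gamma^{(k)}(t)=\tilde\gamma^{(k)}(t)$ on $[0,\tau]$. Iterating in $t$ covers $[0,T)$, and injectivity of the de Finetti correspondence then yields $\mu_t=(S_t)_*\mu$, giving both uniqueness and \eqref{eq-GPdeF-pullb-sol-1}.

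\textbf{Main obstacle.} The most delicate point is the remainder estimate in the uniqueness step, specifically showing that it vanishes \emph{without} assuming any extra a priori spacetime bound on the solution of the type required in the conditional uniqueness result of \cite{KM}. This is where the quantum de Finetti theorem is indispensable: at each layer of the boardgame expansion, it trades what would be a hierarchy-level multilinear estimate for a single-particle $H^1$-level estimate, the $H^1$ bound being supplied uniformly by the support of $\mu_t$ as derived from the $\frH^1$ hypothesis. This trade is what makes the argument both considerably shorter than the original proof of Erd\"os-Schlein-Yau and, crucially, unconditional.
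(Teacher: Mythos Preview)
This theorem is not proved in the present paper; it is quoted from \cite{CHPS-1} and used here only as input for Theorems~\ref{thm-main-1} and~\ref{thm-main-2}. The only information the paper gives about the argument is that it ``combines the Erd\H{o}s--Schlein--Yau combinatorial method in boardgame formulation \cite{KM} with an application of the quantum de Finetti theorem,'' and your outline is consistent with that description: de Finetti at each fixed time to reduce to factorized states, iterated Duhamel organized via the boardgame, and a remainder estimate closed using the uniform $H^1$ bound on the support of $\mu_t$.

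One phrase in your sketch deserves tightening. In the uniqueness step you write that ``the remainder is controlled by the Klainerman--Machedon multilinear estimates on $B_{k+1}$.'' If by this you mean the dispersive spacetime bounds of \cite{KM}, that is not what \cite{CHPS-1} does, and indeed your own ``Main obstacle'' paragraph contradicts it. The whole point of inserting the de Finetti representation at level $k+n$ is that each contraction operator now hits a pure tensor $(|\phi\rangle\langle\phi|)^{\otimes(k+n)}$, so the relevant estimate becomes a purely spatial Sobolev-type bound in terms of $\|\phi\|_{H^1}$, supplied uniformly by the support of $\mu_s$; the boardgame from \cite{KM} contributes only the combinatorial reduction of the $n!$ Duhamel terms to a geometric (Catalan) count, not any dispersive estimate. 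This is precisely what removes the a priori spacetime hypothesis of \cite{KM} and makes the result unconditional. Your final paragraph has this exactly right; just make the earlier sentence consistent with it.
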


\section{Statement of main resuilts}

In this paper, we prove the existence of scattering
states for the defocusing cubic GP hierarchy in $\R^3$. 
Moreover, we investigate the necessity of the energy growth condition in the definition of the
solution spaces $\frH^1$, see \eqref{frakH-def-1}.

\subsection{Scattering for the cubic GP hierarchy in $\R^3$}

We prove the existence of scattering states using the quantum de Finetti theorems,
Theorem \ref{thm-NLSscattering-1},
and Theorem \ref{thm-uniqueness-1}, which was proved in our earlier paper \cite{CHPS-1}.
The initial data for the GP hierarchy $\Gamma_0=(\gamma_0^{(k)})_{k\in\N}$ have the form
\eqn\label{eq-gamma0-def-1}
    \gamma_0^{(k)} = \int  d\mu(\phi)\big( |\phi\rangle\langle\phi|\big)^{\otimes k} \,.
\eeqn 
We consider the defocusing cubic NLS with $\lambda=1$, and assume that  
\eqn\label{eq-muEphi-cond-1}
    \int d\mu(\phi) (E[\phi])^{2k} \leq R^k
\eeqn
holds for some finite constant $R>0$, and all $k\in\N$, where 
\eqn 
    E[\phi] = \frac12\int |\nabla\phi|^2dx +\frac14 \int |\phi|^4 dx \,,
\eeqn
is the energy functional for the cubic defocusing NLS in $\R^3$.
The condition \eqref{eq-muEphi-cond-1} is equivalent to $\mu$ having support in  a
ball in $H^1$; see Lemma \ref{lm-Chebyshev-1}, below.

We note that while the de Finetti theorems provide the existence and uniqueness of a 
measure $\mu$, $\mu$ is in general not explicitly known. 
Therefore, it is important to express the
condition \eqref{eq-muEphi-cond-1},  directly at the level of density matrices.
This is addressed in Section \ref{ssec-higheren-1} below, where we review {\em higher order
energy functionals} for GP hierarchies that were first introduced in \cite{CPHE}.

The first main result of this paper establishes the existence of scattering states for the 
cubic defocusing GP hierarchy on $\R^3$, and provides the construction
of the corresponding asymptotic measures for the de Finetti representation \eqref{eq-GPdeF-pullb-sol-1}. 
This has been a longstanding open problem
despite much activity in the field.
With our approach via the de Finetti theorem, it follows from the scattering theory for the NLS.

\begin{theorem}\label{thm-main-1}
Let $\Gamma_0=(\gamma_0^{(k)})_{k\in\N}$ be as in \eqref{eq-gamma0-def-1},
and $\lambda=1$ (the defocusing case).
We assume that $\mu$ satisfies 
\eqref{eq-muEphi-cond-1}.

Let $\gamma^{(k)}(t)=\int d\mu(\phi)(|S_t\phi\rangle\langle S_t\phi|)^{\otimes k}$, for $k\in\N$,
denote the unique solution to \eqref{eq-def-GP} 
satisfying $\gamma^{(k)}(0)=\gamma_0^{(k)}$, for $k\in\N$. 

Then, there exist unique asymptotic measures $\mu_+$, $\mu_-$ such that
\eqn
    \gamma_\pm^{(k)}:=\int d\mu_\pm(\phi) (|\phi\rangle\langle\phi|)^k  
\eeqn
are scattering states  $\gamma_+^{(k)}$, $\gamma_-^{(k)}$ 
on $L^2(\R^{3k})$ satisfying
\eqn
    \lim_{t\rightarrow\pm\infty}
    \tr\Big( \, \Big| \, S^{(k,1)} \Big[ \, U^{(k)}(-t)\gamma^{(k)}(t) 
    - \gamma_\pm^{(k)} \, \Big] \, \Big| \, \Big)
    =0
\eeqn
for all $k\in\N$. 
In particular,
\eqn 
    d\mu_\pm(\phi) = d\mu(W_\pm(\phi)) \,
\eeqn
where the continuous bijections $W_+$, $W_-:H^1\rightarrow H^1$ are the wave operators
from Theorem \ref{thm-NLSscattering-1}.
\end{theorem}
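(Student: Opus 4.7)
The plan is to exploit the factorized representation of the GP solution provided by Theorem~\ref{thm-uniqueness-1} to reduce GP scattering to NLS scattering (Theorem~\ref{thm-NLSscattering-1}). I would first define the candidate asymptotic measures by $d\mu_\pm(\phi) := d\mu(W_\pm(\phi))$, which are well-defined Borel probability measures on $H^1(\R^3)$ since $W_\pm$ are continuous bijections. Using the change of variable $\psi = W_\pm(\phi)$, one obtains
$$
\gamma_\pm^{(k)} = \int d\mu(\psi)\,\bigl(|W_\pm^{-1}(\psi)\rangle\langle W_\pm^{-1}(\psi)|\bigr)^{\otimes k}.
$$
Combining this with the formula $\gamma^{(k)}(t) = \int d\mu(\phi)(|S_t\phi\rangle\langle S_t\phi|)^{\otimes k}$ from Theorem~\ref{thm-uniqueness-1}, and writing $\phi_t := e^{-it\Delta}S_t(\phi)$ and $\phi_\pm := W_\pm^{-1}(\phi)$, the task reduces to showing
$$
\lim_{t\to\pm\infty}\int d\mu(\phi)\,\tr\Big(\Big|S^{(k,1)}\bigl[(|\phi_t\rangle\langle\phi_t|)^{\otimes k} - (|\phi_\pm\rangle\langle\phi_\pm|)^{\otimes k}\bigr]\Big|\Big) = 0,
$$
where Theorem~\ref{thm-NLSscattering-1} already supplies the pointwise convergence $\|\phi_t - \phi_\pm\|_{H^1} \to 0$ for every $\phi$.

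The main technical ingredient is an elementary rank-one trace-norm estimate. Since $S^{(k,1)}$ is the tensor product of $(1-\Delta_{x_j})^{1/2}(1-\Delta_{x_j'})^{1/2}$ over the $k$ coordinates, it acts on a factorized density matrix by
$$
S^{(k,1)}\bigl[(|\phi\rangle\langle\phi|)^{\otimes k}\bigr] = |u^{\otimes k}\rangle\langle u^{\otimes k}|, \qquad u := (1-\Delta)^{1/2}\phi,
$$
with $\|u\|_{L^2} = \|\phi\|_{H^1}$. Expanding the difference $|u^{\otimes k}\rangle\langle u^{\otimes k}| - |v^{\otimes k}\rangle\langle v^{\otimes k}|$ as a sum of two rank-one operators and telescoping $u^{\otimes k} - v^{\otimes k}$ along the $k$ factors then gives
$$
\tr\Big(\Big|S^{(k,1)}\bigl[(|\phi_t\rangle\langle\phi_t|)^{\otimes k} - (|\phi_\pm\rangle\langle\phi_\pm|)^{\otimes k}\bigr]\Big|\Big) \leq 2k\,M^{2k-1}\|\phi_t - \phi_\pm\|_{H^1},
$$
with $M := \max(\|\phi_t\|_{H^1},\|\phi_\pm\|_{H^1})$.

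To finish, I would use the triangle inequality for the trace norm (in the form $\tr|\int f\,d\mu| \leq \int \tr|f|\,d\mu$) to exchange $\tr(|\cdot|)$ with the $\mu$-integral, and then invoke dominated convergence. Energy and mass conservation for the defocusing NLS bound $\|\phi_t\|_{H^1}$ uniformly in $t$ by some $C(\|\phi\|_{H^1})$, and $\|\phi_\pm\|_{H^1}$ is similarly controlled via $\|\nabla\phi_\pm\|_{L^2}^2 = 2E[\phi]$ and mass conservation. Hypothesis~\eqref{eq-muEphi-cond-1}, via Lemma~\ref{lm-Chebyshev-1}, places the support of $\mu$ in a bounded $H^1$-ball, so the dominating function $C(\|\phi\|_{H^1})^{2k}$ is $\mu$-integrable and dominated convergence delivers the claim. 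Uniqueness of $\mu_\pm$ is inherited from the uniqueness clause of Theorem~\ref{thm-strongDeFinetti-1} applied to the admissible sequence $(\gamma_\pm^{(k)})$. The main obstacle is not any single hard step but the bookkeeping needed to verify that $\mu_\pm$ retains an $H^1$-support bound, so that $\gamma_\pm^{(k)}\in\frH^1$ and the dominating function is $\mu$-integrable; both points reduce to the explicit identities linking the wave operators with the NLS conservation laws.
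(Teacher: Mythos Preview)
Your proposal is correct and follows essentially the same route as the paper: define $d\mu_\pm(\phi)=d\mu(W_\pm(\phi))$, use Lemma~\ref{lm-Chebyshev-1} together with energy and mass conservation to get uniform $H^1$ bounds on $S_t(\phi)$ and $W_\pm^{-1}(\phi)$ over the support of $\mu$, telescope the $k$-fold tensor difference, and close with dominated convergence driven by the NLS scattering limit $\|e^{-it\Delta}S_t(\phi)-W_\pm^{-1}(\phi)\|_{H^1}\to 0$. The only cosmetic difference is that the paper telescopes at the operator level, writing $A_0^{\otimes k}-A_1^{\otimes k}=\sum_{j=0}^{k-1}A_1^{\otimes j}\otimes(A_0-A_1)\otimes A_0^{\otimes k-1-j}$ and then applying H\"older in $\mu$, whereas you pass $S^{(k,1)}$ through the rank-one projections first and telescope $u^{\otimes k}-v^{\otimes k}$ at the vector level; both yield the same $O\bigl(kM^{2k-1}\|\phi_t-\phi_\pm\|_{H^1}\bigr)$ control.
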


More generally, our method allows to transfer knowledge about the non-linear 
Schr\"odinger equation (as given in Theorem \ref{thm-NLSscattering-1}) to 
results about the GP hierarchy. For instance, if the existence of scattering states for the 
focusing NLS can be shown for a suitable set of initial data (see for instance \cite{DuyHolRou}), one can also infer a 
corresponding result for the GP hierarchy for initial states with de Finetti measure 
$\mu$ supported on that set.

\subsubsection{Higher order energy functionals}
\label{ssec-higheren-1}

The condition on $\mu$ given in \eqref{eq-muEphi-cond-1} 
can be formulated directly at the level of marginal density matrices.
This is of importance because the initial data for the GP hierarchy is
usually provided at the level of density matrices $\gamma_0^{(k)}$,
without explicit determination of the measure $\mu$.
To this end, we recall the higher order energy functionals that were introduced in \cite{CPHE}.
In the case of the cubic GP hierarchy, they are defined by
\begin{align}\label{eq-HE-def-1}
    \langle K^{(m)}\rangle_{\Gamma(t)}:=\tr(K^{(m)}\gamma^{(2m )}(t))
\end{align}
for $m\in\mathbb{N}$, where
\eqn
    K_\ell&:=&\frac{1}{2}(1-\Delta_{x_\ell})\text{Tr}_{\ell+1}+\frac{1}{4}B^+_{\ell;\ell+1} 
    \;\;\;,\;\;\;\ell\in\N\,, 
    \nonumber\\
    K^{(m)}&:=&K_1K_{3}\cdots K_{2m- 1}.
\eeqn
In \cite{CPHE}, it is shown that these higher order energy functionals are conserved.
 
We note that 
\eqn\label{eq-mu-condfoc-2-0}
    \tr(K_1K_3\cdots K_{2k-1} \gamma_0^{(2k)}) =
    \int d\mu(\phi) (E[\phi])^k  
\eeqn
corresponding to \eqref{eq-muEphi-cond-1}; see Section 4 of \cite{CPHE}.

\subsection{Energy growth condition}
 
Results on the well-posedness of the Cauchy problem for the GP hierarchy are usually
obtained in solution spaces of marginal density matrices where an exponential
growth condition either of the form 
\eqn\label{eq-H1growth-1}
    \tr|S^{(k,1)}[\gamma^{(k)}]| < R^{2k} \;\;\;\forall k\in\N
\eeqn
holds in the trace norm, or of the form
\eqn\label{eq-H1growth-2}
    \|S^{(k,1)}[\gamma^{(k)}] \|_{\rm HS} < R^{2k} \;\;\;\forall k\in\N
\eeqn
in the Hilbert-Schmidt norm.
In the works \cite{chpa,chpa2,CPHE,CPBBGKY,chpatz1,chpatz2}
and \cite{xch3,CheHol-2013,kiscst,CT1,zxie}, well-posedness is studied
in solution spaces incorporating the condition \eqref{eq-H1growth-2}.
In \cite{esy1,esy2} and the paper at hand, only the case \eqref{eq-H1growth-1} is considered;
a condition of this form  is an important technical ingredient
for these uniqueness proofs.
We would like to address the crucial question whether the 
energy growth condition \eqref{eq-H1growth-1} in the definition of the space $\frH^1$ 
is necessary for a well-posedness theory.

We introduce the quantity
\eqn\label{eq-logbound-1}
    \cR_{H^1}(\mu):= \exp\Big[ \, \limsup_{k\rightarrow\infty}\frac1{2k}\log\Big(\    
    \int d\mu(\phi) \|\phi\|_{H^1}^{2k} \, \Big)  \, \Big] \,,
\eeqn 
which corresponds to the radius of the smallest ball in $H^1$ that contains the support of $\mu$.
We observe that \eqref{eq-H1growth-1}, expressed 
via the de Finetti theorem as 
\eqn
    \int d\mu(\phi)\|\phi\|_{H^1}^{2k}<R^{2k} \;\;,\;\;\;\forall k\in\N\,,
\eeqn
is equivalent to
the condition that $\mu$ satisfies
\eqn 
    \cR_{H^1}(\mu) \, < \,  R \,,
\eeqn
for $R<\infty$. 
Hence,  \eqref{eq-H1growth-1} simply means that $\mu$ has bounded support in $H^1$.


Here, we prove that if a faster than exponential growth rate is
admitted, so that $\cR_{H^1}(\mu)=\infty$, 
the {\em  focusing} cubic GP hierarchy is {\em ill-posed}, in the sense that there exist
initial data at $t=0$ for which the solution blows up instantaneously;
that is, the norm
$\tr(|S^{(k,1)}\gamma^{(k)}(t)|)$ diverges for any positive $t>0$.

This result is a consequence of the following well-known 
result about the blowup in $H^1$ of solutions of the cubic NLS in the focusing case $\lambda=-1$.
Eq. \eqref{eq-NLS-def-1} is locally well-posed; given any initial data $\phi_0\in H^1$, 
there exists $\tau=\tau(\phi_0)>0$
and a unique solution $\phi(t)\in  H^1$ for $t\in[0,\tau)$.
However, the solution might only exist for a finite time.
Let 
\eqn 
    V[\phi](t) := \|x\phi(t)\|_{L^2}^2
\eeqn    
denote the quadratic moment in $x$ with respect to $\phi(t)$.
Then, blowup in finite time occurs whenever $E[\phi_0]<0$ and $V[\phi_0]<\infty$.
This is proven by use of the {\em virial identities} 
(Vlasov-Petrishchev-Talanov \cite{vlpetal}, Zakharov \cite{zakh}, Glassey \cite{glassey})
\eqn\label{eq-virial-1}
    \partial_t V[\phi](t) = 2 \Im  \int x\cdot \overline{\phi(x)}\nabla\phi(x) dx
\eeqn
and
\eqn\label{eq-virial-2}
    \partial_t^2 V[\phi](t) = 16 E[\phi_0] - 2 \|\phi(t)\|_{L^4}^4 \,.
\eeqn
In fact, if $E[\phi_0]<0$, the r.h.s. of \eqref{eq-virial-2}
is strictly negative, and therefore, $\|x\phi(t)\|_{L^2}$ tends to
zero in finite time.
However, by the Heisenberg uncertainty principle,
\eqn 
    \|\phi_0\|_{L^2}^2 
    \leq C \|x\phi(t)\|_{L^2}\|\phi(t)\|_{H^1} \,,
\eeqn
see for instance \cite{TaoBook}.
Hence, a bound of the form $\|x\phi(t)\|_{L^2}< b(t)$ with $b(t) \searrow0$ as $t\nearrow T= T(\phi_0)$ 
implies that the solution blows up in $H^1$, that is, 
$\|\phi(t)\|_{H^1}\nearrow\infty$ as $t\nearrow \tau=\tau(\phi_0)$ for some $\tau\leq T$.
We refer to $\tau(\phi_0)$ as the blowup time corresponding to the initial data $\phi_0\in H^1$.

One can easily derive an upper bound on the blowup time as follows. From 
the virial identity \eqref{eq-virial-1}, it follows that 
\eqn 
    |\partial_t \|x\phi(t)\|_{L^2}^2 | \leq 2 \|x\phi(t)\|_{L^2} \|\phi(t)\|_{\dot H^1} \,,
\eeqn
and from \eqref{eq-virial-2} that
\eqn 
    \partial_t^2 \|x\phi(t)\|_{L^2}^2 < 16 E[\phi(t)] = 16 E[\phi] \,,
\eeqn
where $\phi(t)$ solves the focusing cubic NLS with initial data $\phi$.
From second order Taylor expansion in $t$, we thus find that  
\eqn\label{eq-blowup-quadeq-1}
    \|x\phi(t)\|_{L^2}^2 &\leq& \|x\phi\|_{L^2}^2 +
     2 t \|x\phi\|_{L^2} \|\phi\|_{\dot H^1} +8t^2  E[\phi] \,.
\eeqn
While the left hand side is non-negative, the right hand side becomes negative in
finite time if $E[\phi]<0$, which implies that the solution blows up in $H^1$.
If $E[\phi]<0$, it follows that the quadratic equation on the right hand side has
precisely one positive and one negative root.
The positive root $T(\phi)>0$  is an upper bound
on the blowup time $\tau(\phi)$.

Combining this with the de Finetti representation \eqref{eq-GPdeF-NLS-sol-1}
for solutions to the GP hierarchy, we obtain the following main result.

\begin{theorem}\label{thm-main-2}
Consider the set of probability measures $\mu$  on the unit ball in $L^2(\R^3)$.
Then, the following dichotomy holds for the focusing cubic GP hierarchy \eqref{eq-def-GP} 
(where we have $\lambda=-1$):
\begin{itemize}
\item
For the subset of probability measures satisfying 
\eqn\label{eq-logbound-0}
    \cR_{H^1}(\mu) < \infty\,,
\eeqn 
the following holds.
Given $\mu_0\in\{\mu\,|\,\cR_{H^1}(\mu) <\infty\}$,  there exists a unique solution to 
the focusing cubic GP hierarchy in $L^\infty_{[0,T)}\frH^1$, 
for some $T=T(\mu_0)>0$, with the initial data 
\eqn\label{eq-gamma0-deF-mainthm2}
    \Big(\,\gamma^{(k)}_0 = \int d\mu_0(\phi) (|\phi\rangle\langle\phi|)^{\otimes k} 
    \, \Big)_{k\in\N} 
\eeqn 
in $\frH^{1}$.
\item
For the subset of probability measures satisfying 
\eqn\label{eq-logbound-1-0}
    \cR_{H^1}(\mu) = \infty\,,
\eeqn 
the following holds.
For any $\delta>0$, there exist probability measures $\mu_0\in \{\mu\,|\,\cR_{H^1}(\mu) =\infty\}$
with the following properties:
\begin{itemize} 
\item
The right hand side of \eqref{eq-logbound-1} diverges at a rate at most $\exp(c k^\delta)$ 
as $k\rightarrow\infty$,
\eqn\label{eq-logbound-2}
   \exp\Big[ \, \frac1{2k}\log\Big(\    
    \int d\mu_0(\phi) \|\phi\|_{H^1}^{2k} \, \Big)  \, \Big] \, < \, C e^{c k^\delta} \,.
\eeqn 
\item  
The initial data defined by $\mu_0$ as in 
\eqref{eq-gamma0-deF-mainthm2}
satisfies $\tr(|S^{(k,1)}\gamma_0^{(k)}|)<\infty$ for all $k\in\N$,
but the associated solution to the cubic focusing GP hierarchy
displays instantaneous blowup (see below for the precise definition). 
\end{itemize}
\end{itemize}
\end{theorem}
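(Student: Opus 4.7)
The plan is to read the dichotomy off the NLS/GP correspondence in Theorem \ref{thm-uniqueness-1}, combined with the local well-posedness and virial blowup theory for the focusing cubic NLS in $H^1(\R^3)$.

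The well-posedness half is immediate once $\mu_0$ has bounded $H^1$-support: by Lemma \ref{lm-Chebyshev-1}, $\cR_{H^1}(\mu_0)<\infty$ forces $\supp\mu_0\subset B_R\subset H^1$ for some finite $R$. Local well-posedness of the focusing cubic NLS in $H^1$ then gives a time $T=T(R)>0$ and a uniform a priori bound on $\|S_t\phi\|_{H^1}$ for $\phi\in B_R$, $t\in[0,T)$. One checks directly that $\gamma^{(k)}(t):=\int d\mu_0(\phi)(|S_t\phi\rangle\langle S_t\phi|)^{\otimes k}$ is a mild $\frH^1$-solution on $[0,T)$, with the $R^{2k}$-growth condition inherited from the uniform one-particle bound, and uniqueness is given by Theorem \ref{thm-uniqueness-1}.

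The ill-posedness half is the heart of the argument. Using the scaling invariance of the cubic NLS, I would generate a sequence of one-particle initial data whose blowup times accumulate at $0$, and then bundle them into a discrete de Finetti measure with prescribed growth. Concretely, fix any Schwartz $\phi_*$ with $\|\phi_*\|_{L^2}=1$, $V[\phi_*]<\infty$, and $E[\phi_*]<0$ (obtained by localizing a bump on a sufficiently small spatial scale), and set $\phi_n(x):=n\phi_*(nx)$. The scaling computation
\[
    \|\phi_n\|_{L^2}^2=n^{-1},\quad E[\phi_n]=n\,E[\phi_*]<0,\quad V[\phi_n]=n^{-3}V[\phi_*],
\]
together with $\|\phi_n\|_{H^1}^2\sim n\|\nabla\phi_*\|_{L^2}^2$, shows via \eqref{eq-blowup-quadeq-1} that the blowup-time upper bound satisfies $T(\phi_n)=T(\phi_*)/n^2\to 0$, and each $\phi_n$ lies in the $L^2$-unit ball. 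I would then take $\mu_0:=C_\alpha\sum_{n\ge 1}e^{-n^\alpha}\delta_{\phi_n}$ for a fixed $\alpha\in(0,1)$. A Stirling-type estimate on $\sum_n n^k e^{-n^\alpha}$ yields $(\int d\mu_0\,\|\phi\|_{H^1}^{2k})^{1/(2k)}=O(k^{1/(2\alpha)})$, which is unbounded in $k$ (so $\cR_{H^1}(\mu_0)=\infty$) yet dominated by $C\,e^{ck^\delta}$ for any $\delta>0$, verifying \eqref{eq-logbound-2}. Finiteness of $\tr(|S^{(k,1)}\gamma_0^{(k)}|)=\int d\mu_0\,\|\phi\|_{H^1}^{2k}$ for each $k\in\N$ is built into the construction.

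To convert NLS blowup into GP-hierarchy blowup, I would invoke Theorem \ref{thm-uniqueness-1}: on any interval $[0,T)$ on which a mild $\frH^1$-solution exists, it must coincide with $\int d\mu_0(\phi)(|S_t\phi\rangle\langle S_t\phi|)^{\otimes k}$, and hence
\[
    \tr(|S^{(k,1)}\gamma^{(k)}(s)|)\ge C_\alpha e^{-n^\alpha}\|S_s\phi_n\|_{H^1}^{2k}
\]
for each $n$ and each $s<\tau(\phi_n)$. Given any $T>0$, choosing $n$ with $\tau(\phi_n)\le T(\phi_n)<T$ and sending $s\nearrow\tau(\phi_n)$ shows $\sup_{s\in[0,T)}\tr(|S^{(k,1)}\gamma^{(k)}(s)|)=\infty$ for every $k\ge 1$, which is the precise content of instantaneous blowup. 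The main obstacle is the joint calibration of the scaling exponent and the weight $e^{-n^\alpha}$ so that all five constraints hold simultaneously (finite $k$-moments, subexponential growth $e^{ck^\delta}$, $\cR_{H^1}(\mu_0)=\infty$, support in the $L^2$-unit ball, and $\tau(\phi_n)\to 0$); the $\frH^1$-uniqueness from Theorem \ref{thm-uniqueness-1} is what makes the individual-trajectory blowup binding on the hierarchy rather than avoidable by some ad hoc continuation past the one-particle singularity.
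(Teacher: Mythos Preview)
Your strategy matches the paper's on both halves: local $H^1$ well-posedness of the focusing NLS together with Theorem~\ref{thm-uniqueness-1} for the first part, and a discrete de Finetti measure concentrated on one-particle data whose blowup times accumulate at zero for the second. The explicit constructions differ. The paper partitions the unit sphere into dyadic $\dot H^1$-shells $\cN_j$, isolates subsets $\cM_j$ on which $\|\phi\|_{L^4}>C2^{5j/8}$ (so that $E[\phi]<-C2^{5j/2}$, forcing blowup time $\lesssim 2^{-5j/2}$ via \eqref{eq-blowup-quadeq-1}), and assigns weights $\kappa_r(j^{j^{1/\delta}})^{-j}$; a separate combinatorial lemma then gives $\tr(|S^{(k,1)}\gamma^{(k)}_0|)\le e^{ck^{1+\delta}}$. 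Your scaling $\phi_n=n\phi_*(n\,\cdot)$ with weights $e^{-n^\alpha}$ is more economical: the NLS scaling symmetry gives $\tau(\phi_n)\le T(\phi_*)/n^2$ directly, and the moment computation yields only polynomial growth of the $2k$-th root, so a single measure satisfies \eqref{eq-logbound-2} for every $\delta>0$. The trade-off is that your $\phi_n$ sit in the open unit ball rather than on the unit sphere.

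One point needs re-framing. Your closing paragraph invokes Theorem~\ref{thm-uniqueness-1} to say that any mild $\frH^1$-solution on $[0,T)$ must equal the de Finetti integral, then extracts a contradiction from the blowup of an atom. But a mild solution in $L^\infty_{[0,T)}\frH^1$ has, by definition, initial data in $\frH^1$, which your $\Gamma_0$ is not (indeed $\cR_{H^1}(\mu_0)=\infty$ is the whole point); so the nonexistence of such a solution is vacuous and the atom blowup is never actually used. The paper's precise notion of instantaneous blowup (in the Remarks after the theorem) is instead that the $H^1$-\emph{truncated} solutions $\gamma_R^{(k)}(t)$ --- which do lie in $\frH^1$ --- blow up in a time that tends to $0$ as $R\to\infty$. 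Your construction delivers exactly this (the truncation at level $R$ contains $\phi_n$ for $n\lesssim R^2$, and that piece blows up by time $\sim R^{-4}$), so the conclusion should be phrased in those terms rather than as a contradiction to $\frH^1$-existence. Minor: the de Finetti measure must be phase-invariant, so replace each $\delta_{\phi_n}$ by its $S^1$-average.
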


\subsection{Remarks}
We make the following remarks concerning the case \eqref{eq-logbound-1-0}: 
\begin{itemize}
\item
The precise meaning of instantaneous blowup that we are considering is as follows.
Let $A_{R}:=\{\phi\in L^2 | \,\|\phi\|_{L^2}=1 \,,\, \|\phi\|_{H^1}\leq R \}$ for $R>0$, and 
denote by $\1_{A_R}$ the corresponding characteristic function. Then, 
for every   $R>0$, there exists $T=T(R)>0$ such that the sequence of regularized density matrices
\eqn 
    \Big( \, \gamma^{(k)}_R(t) := \int d\mu_0(\phi) \1_{A_R}(\phi)  
    (|S_t(\phi)\rangle\langle S_t(\phi)|)^{\otimes k} \Big)_{k\in\N} 
\eeqn 
is a solution to the focusing cubic GP hierarchy in $L^\infty_{t\in[0,T(R))}\frH^{1}$. 
However, in the limit $R\rightarrow\infty$,
\eqn\label{eq-limlog-infty-1}
     \lim_{R\rightarrow\infty}\tr(|S^{(k,1)}[\gamma_R^{(k)}(t)] |) = \infty \;\;\;\forall t>0 \,,
\eeqn
for any $k\in\N$.
It is in this sense that we say that $\tr(|S^{(k,1)}[\gamma^{(k)}(t)] |)$ blows up instantaneously for $t>0$.  
\\
\item
We note that for local well-posedness to hold, it is necessary that
$\mu_0$-almost surely, the blowup time,
$\tau(\phi)>\e>0$, is bounded away from zero. 
In our analysis of the case \eqref{eq-logbound-1-0}, 
we will construct measures $\mu_0$ for which $\tau(\phi)$ can be arbitrarily small
on the support of $\mu_0$. This is only possible when $\|\phi\|_{H^1}$ can be arbitrarily large 
on the support of $\mu_0$.
\end{itemize}

\section{Proof of Theorem \ref{thm-main-1}}

In this section, we apply the quantum de Finetti theorem to prove 
the existence of scattering states for solutions 
to the defocusing cubic GP hierarchy in 3 dimensions. 

To begin with, we observe that the condition \eqref{eq-muEphi-cond-1}
implies that $E[\phi]\leq  R$ holds $\mu$-almost surely.

\begin{lemma}\label{lm-Chebyshev-1}
Assume that 
\eqn\label{eq-EnCond-1}
    \int d\mu(\phi) (E[\phi])^{2k} \leq R^{2k}
\eeqn
holds for some finite constant $R>0$, and all $k\in\N$. Then,  
\eqn
    \mu\Big( \Big\{ \, \phi\in L^2(\R^3) \, \Big| \, \|\phi\|_{L^2}=1 \;,\; E[\phi] >  R \,\Big\} \Big) \, = \, 0 \,.
\eeqn
\end{lemma}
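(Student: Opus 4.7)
The plan is to use a Chebyshev/Markov-type argument, exploiting that the bound $\int (E[\phi])^{2k} d\mu \leq R^{2k}$ is uniform in $k$. The heuristic is standard: if the $L^{2k}$ norm of a nonnegative random variable stays bounded by $R$ as $k \to \infty$, then its $L^\infty$ norm (i.e.\ its essential supremum) is at most $R$.

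More concretely, for each $\e > 0$ define the set $A_\e := \{\phi \in L^2(\R^3) : \|\phi\|_{L^2}=1,\ E[\phi] > R + \e\}$. On $A_\e$ the integrand satisfies $(E[\phi])^{2k} > (R+\e)^{2k}$, so by \eqref{eq-EnCond-1},
\begin{equation*}
R^{2k} \, \geq \, \int d\mu(\phi)(E[\phi])^{2k} \, \geq \, \int_{A_\e} d\mu(\phi)(E[\phi])^{2k} \, \geq \, (R+\e)^{2k}\, \mu(A_\e).
\end{equation*}
Rearranging yields $\mu(A_\e) \leq \bigl(R/(R+\e)\bigr)^{2k}$, and since this holds for every $k \in \N$, letting $k \to \infty$ gives $\mu(A_\e) = 0$.

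Finally, the set $\{E[\phi] > R\}$ is the countable union $\bigcup_{n\in\N} A_{1/n}$, so by countable subadditivity $\mu(\{E[\phi] > R\}) = 0$, which is the desired conclusion. There is no real obstacle here; the only thing to check is that the energy functional $E[\phi]$ is a well-defined measurable functional on the support of $\mu$ (which is immediate since $\mu$ is a Borel probability measure on $L^2$ and $E$ is lower semicontinuous, taking values in $[0,\infty]$), so that the level sets $A_\e$ are measurable.
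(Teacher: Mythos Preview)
Your proof is correct and takes essentially the same approach as the paper: both apply the Markov/Chebyshev inequality to bound $\mu(\{E[\phi]>\lambda\})$ by $(R/\lambda)^{2k}$ for $\lambda>R$ and let $k\to\infty$. Your version is slightly more explicit in spelling out the countable-union step $\{E[\phi]>R\}=\bigcup_n A_{1/n}$ and the measurability remark, but the argument is otherwise identical.
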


\begin{proof}
From Chebyshev's inequality, we have that
\eqn
    \lefteqn{
    \mu\Big( \, \Big\{ \, \phi\in L^2(\R^3) \, \Big| \, \|\phi\|_{L^2}=1 \;,\; E[\phi] > \lambda \,\Big\} \, \Big)
    }
    \nonumber\\
    &\leq& \frac1{\lambda^{2k}}\int d\mu(\phi) (E[\phi])^{2k} \leq \frac{R^{2k}}{\lambda^{2k}} \,,
\eeqn
and for $\lambda> R$, the right hand side tends to zero when $k\rightarrow\infty$.
\end{proof}

Recalling that $\lambda=1$,
the representation \eqref{eq-gamma0-def-1} immediately yields
\eqn\label{eq-defoc-H1bd-1}
    \tr(|S^{(k,1)}[\gamma_0^{(k)}] |) &=& \int d\mu(\phi)\|\phi\|_{H^1}^{2k} 
    \nonumber\\
    &\leq&
     \int d\mu(\phi) (1+2E[\phi])^{k} \leq (1+2 R)^{k} \;\;\; \forall k\in \N \,.
\eeqn
This implies that $\mu$-almost surely, $\|\phi\|_{H^1}^2\leq 1+2E[\phi] \leq 1+2R$, by the
same argument as in Lemma \ref{lm-Chebyshev-1}.
Thus, Theorem \ref{thm-NLSscattering-1} implies that $\mu$-almost surely, there exists a unique
solution to the defocusing cubic NLS \eqref{eq-NLS-def-1} with initial data $\phi(0)=\phi$ which exhibits scattering and 
asymptotic completeness. For notational convenience
further below, we denote $g_\pm(\phi):=\phi_\pm$, such that
\eqn 
    \lim_{t\rightarrow\pm\infty} \|e^{-it\Delta}S_t(\phi)-g_\pm(\phi)\|_{H^1} = 0 \,.
\eeqn
Then, $g_\pm(\phi)=W_\pm^{-1}(\phi)$.

Using the de Finetti representation of
the $k$-particle marginal  
\eqn
    \gamma^{(k)} = \int  d\mu(\phi)\big( |\phi\rangle\langle \phi |\big)^{\otimes k} \,,
\eeqn
we let
\eqn\label{eq-gpm-def-1}
     \gamma_\pm^{(k)} 
     &:=&  \int  d\mu(\phi)\big( |g_\pm(\phi)\rangle\langle g_\pm(\phi)|\big)^{\otimes k} 
     \nonumber\\
     &=&\int  d\mu_\pm(\phi)\big( |\phi \rangle\langle \phi |\big)^{\otimes k}  \,,
\eeqn
where $d\mu_\pm(\phi)=d\mu(W_\pm(\phi))$.

It follows from energy conservation and positivity of the
potential energy term $\lambda\|\phi\|_{L^4}^4$ that $\mu$-almost surely,
\eqn\label{eq-phitg-bd-1}
    \|S_t(\phi)\|_{H^1}^2 &\leq& 1+2E[\phi] \; \leq \; 1+2R
    \nonumber\\
    \|g_+(\phi)\|_{H^1}^2 &\leq& 1+2 E[\phi] \; \leq \; 1+2R \,.
\eeqn
For $\phi\in H^1$ satisfying $E[\phi]<R$, we have
\eqn\label{eq-scattdiff-bd-1}
     \|e^{-it\Delta}S_t(\phi)-g_\pm(\phi)\|_{H^1}&\leq&  \|e^{-it\Delta}S_t(\phi)\|_{H^1}
     +\|g_\pm(\phi)\|_{H^1} 
     \nonumber\\
     &\leq& 2(1+2E[\phi])^{1/2} < 2(1+2R)^{1/2}
\eeqn
uniformly in $\phi$, and uniformly in $t\in\R$. 
Thus, we obtain that
\eqn\label{eq-domconv-1}
    \lefteqn{
    \lim_{t\rightarrow\pm\infty}
    \int d\mu(\phi) \|e^{-it\Delta}S_t(\phi)
    -g_\pm(\phi)\|_{H^1} 
    }
    \nonumber\\
    &=&
    \int d\mu(\phi) \lim_{t\rightarrow\pm\infty}\|e^{-it\Delta}S_t(\phi)
    -g_\pm(\phi)\|_{H^1}
    \nonumber\\
    &=&
    0 \, ,
\eeqn  
from the dominated convergence theorem.

We may now prove the existence of scattering states at the level of the GP hierarchy.
Using Theorem \ref{thm-main-1} and \eqref{eq-gpm-def-1}, we obtain that
\eqn
    \lefteqn{
    \tr\Big(\Big|S^{(k,1)}\Big[U^{(k)}(-t)\gamma^{(k)}(t) - \gamma_+^{(k)}) \Big]\Big|\Big)
    }
    \nonumber\\
    &=&\int d\mu(\phi) \tr\Big(\Big|S^{(k,1)}
    \Big[ \, 
    \big( \, |U(-t)S_t(\phi)\rangle\langle U(-t)S_t(\phi)| \, \big)^{\otimes k}
    \nonumber\\
    &&\hspace{5cm}
    - \big( |g_+(\phi)\rangle\langle g_+(\phi)|\big)^{\otimes k}  \Big) \Big]\Big|\Big) 
    \label{eq-auxdiff-1-1}
\eeqn
Using the identity 
\eqn  
    A_0^{\otimes k}-A_1^{\otimes k} 
    &=&
    \sum_{j=0}^{k-1} 
    A_{1}^{\otimes j}\otimes (A_0-A_1) \otimes A_{0}^{\otimes k-1-j}
\eeqn
with $A_0:=|U(-t)S_t(\phi)\rangle\langle U(-t)S_t(\phi)|$ and $A_1:=|g_+(\phi)\rangle\langle g_+(\phi)|$, 
and 
\eqn
    \tr(|S^{(1,1)}[A_0-A_1]|) \leq \|e^{-it\Delta}S_t(\phi) - g_+(\phi)\|_{H^1} \big( \, \|S_t(\phi)\|_{H^1} +
    \|g_+(\phi)\|_{H^1} \, \big)
\eeqn
we find 
\eqn 
    \eqref{eq-auxdiff-1-1}
    &\leq&\sum_{j=0}^{k-1}     
    \int d\mu(\phi)\tr(|S^{(1,1)}[A_0-A_1]|)(\tr(|S^{(1,1)}[A_1]|))^j\tr(|S^{(1,1)}[A_0]|)^{k-j-1}
    \nonumber\\
    &\leq&
    \int d\mu(\phi)\|e^{-it\Delta}S_t(\phi) - g_+(\phi)\|_{H^1} ( \, \|S_t(\phi)\|_{H^1} +
    \|g_+(\phi)\|_{H^1} \, )^{2k-1}
    \nonumber\\
    &\leq&\sum_{j=0}^{k-1}
    \Big(\int d\mu(\phi)\|e^{-it\Delta}S_t(\phi) - g_+(\phi)\|_{H^1}^{2k}\Big)^{\frac1{2k}} 
    \nonumber\\
    &&
    \hspace{1cm}
    \Big(\int d\mu(\phi)( \, \|S_t(\phi)\|_{H^1}+\|g_+(\phi) \|_{H^1} \, )^{2k}\Big)^{\frac{2k-1}{2k}} \,.
    \label{eq-scattdiff-1}
\eeqn
It follows from \eqref{eq-defoc-H1bd-1} that $\mu$-almost
surely, $E[\phi(t)]=E[\phi]<R$. 
Together with \eqref{eq-phitg-bd-1}, this implies
\eqn 
    \eqref{eq-scattdiff-1} \leq 
    2^k
    \Big(\int d\mu(\phi)\|e^{-it\Delta}S_t(\phi) - g_+(\phi)\|_{H^1}^{2k}\Big)^{\frac 1{2k}} 
    (1+2R)^{\frac{2k-1}{2k}} \,.
\eeqn
The right hand side converges to zero as $t\rightarrow\infty$, as a consequence of 
\eqref{eq-scattdiff-bd-1} and
\eqref{eq-domconv-1}.

This concludes the proof of Theorem \ref{thm-main-1}.
\qed

\section{Proof of Theorem \ref{thm-main-2}}

\subsection{The case $\cR_{H^1}(\mu)<\infty$}
Given $\cR_{H^1}(\mu)<R$ for some $R<\infty$, 
it follows from Lemma \ref{lm-Chebyshev-1} that $\mu$-almost surely,
$\|\phi\|_{H^1}<R$. 

The focusing cubic  NLS, with flow map $\phi\mapsto S_t(\phi)$, 
is locally well-posed in $H^1(\R^3)$. In particular, there exist constants  $T>0$ and $M<\infty$
such that
$\|S_t(\phi)\|_{H^1}<M$ for $t\in[0,T]$ where $T=T(\|\phi\|_{H^1})$ 
is monotonically decreasing, and where $M=M(\|\phi\|_{H^1})<\infty$ is monotonically 
increasing in $\|\phi\|_{H^1}$ (more details are given in Section \ref{ssec-TMmonot-1} below).
Thus, by monotonicity of $T$ and $M$ with respect to $ \|\phi\|_{H^1}$, 
it follows that $\mu$-almost surely,
$\|S_t(\phi)\|_{H^1}<M(R)<\infty$ for $t\in[0,T(R)]$.

Therefore, $\gamma^{(k)}(t)$ as given in \eqref{eq-GPdeF-NLS-sol-1}, with $k\in\N$,
satisfy 
\eqn
    \sup_{t\in[0,T(R)]}\tr(|S^{(k,1)}[\gamma^{(k)}]|)<(M(R))^{2k} \;,\;\;\forall k\in\N\,,
\eeqn 
and hence, 
$(\gamma^{(k)})_{k\in\N}\in L^\infty_{t\in[0,T(R)]}\frH^1$. This proves the existence of a solution, 
and its uniqueness follows from Theorem \ref{thm-uniqueness-1}.

\subsubsection{Monotonicity of the constants $T$ and $M$ with respect to $\|\phi\|_{H^1}$}
\label{ssec-TMmonot-1}
We remark that one can take $T(\|\phi\|_{H^1})\sim \|\phi\|_{H^1}^{-\beta}$ for some $\beta>0$ and 
$M(\|\phi\|_{H^1})\sim  \|\phi\|_{H^1}$. 
For example, this can be easily obtained from applying the estimate
(3.42) in \cite{chpa2} to factorized solutions to the GP hierarchy 
$\Gamma(t)=((|S_t(\phi)\rangle\langle S_t(\phi)|)^{\otimes k})_{k\in\N}$ with initial data of the focusing
cubic NLS satisfying
$\|\phi\|_{H^1}<R$, and for parameters $\xi_1=\frac{1}{2R}$ and $\xi_2=\frac{1}{4R}$.
In this case, we note that $\|\Gamma(t)\|_{\cH_{\xi_2}^1}=\sum_{k\geq1}\xi_2^k\|S_t(\phi)\|_{H^1}^{2k}$,
etc,
in the notation of \cite{chpa2}.

\subsection{The case $\cR_{H^1}(\mu)=\infty$}
We will explicitly construct a family of probability measures on the unit sphere 
in $L^2(\R^3)$ satisfying  
\eqn\label{eq-limlog-infty-1}
    \cR_{H^1}(\mu)=\infty \,
\eeqn 
with a prescribed maximum rate of divergence, together with
\eqn\label{eq-mumom-cond-1}
    \int d\mu(\phi)\|x\phi\|_{L^2}^2<\infty \,,
\eeqn 
and $\tr(|S^{(k,1)}\gamma_0^{(k)}|)<\infty$ for all $k\in\N$,
such that instantaneous blowup occurs for the corresponding initial data.

In fact, we will be more specific, and construct measures $\mu$ such that the sequence
$(\gamma^{(k)} = \int  d\mu(\phi)\big( |\phi\rangle\langle \phi |\big)^{\otimes k})_{k\in\N}$ belongs to the set
\begin{align}\label{eq-frakHeps-def-1} 
 	\frH^{\alpha,r}:=\Big\{ \,(\gamma^{(k)})_{k\in\N} \, \Big| \, 
 	\tr(| S^{(k,\alpha)}  [\gamma^{(k)}] |) < e^{ck^{r}} \; \mbox{for some constant }c<\infty 
 	\,\Big\} \,
\end{align} 
for $r\geq1$, where evidently, 
$\frH^{\alpha}=\frH^{\alpha,1}$. 

Instead of an exponential growth of order 
$\int d\mu(\phi)\|\phi\|_{H^1}^{2k}\leq R^k = O( e^{ck})$, our aim is to
admit a growth of order $O(e^{ck^r})$ for some arbitrary $r>1$.
We note that any probability measure $\mu$  on $L^2(\R^3)$ having the property that
\eqn\label{eq-mugamma-Hrcond-1}
    \Big( \, \int d\mu(\phi) (|\phi\rangle\langle\phi|)^{\otimes k} 
    \, \Big)_{k\in\N} \,
    \in \frH^{1,r}\setminus\frH^{1} \,
\eeqn 
satisfies \eqref{eq-limlog-infty-1}.  The parameter $r>1$ determines the rate of divergence
of \eqref{eq-logbound-1}.

To construct a measure $\mu$ satisfying \eqref{eq-mumom-cond-1}
and  \eqref{eq-mugamma-Hrcond-1}, we may, for simplicity,  pick $\mu$ to be supported on the unit sphere 
\eqn 
    \cS:=\{\psi\in L^2(\R^3)| \, \|\psi\|_{L^2}=1\}\,.
\eeqn
We consider the dyadic decomposition of $\cS=\cup_{j\in\N_0}\cN_j$ based on the sets
\eqn 
    \cN_j &:=& \Big\{\phi\in L^2(\R^3)\,\Big|\,  \|\phi\|_{L^2}=1 \,,\,
    2^{j-1}<\|\phi\|_{\dot H^1}\leq 2^{j}  \Big\}
    \nonumber\\
    \cN_0 &:=& \Big\{\phi\in L^2(\R^3)\,\Big|\,   \|\phi\|_{L^2}=1 \,,\,
    \|\phi\|_{\dot H^1}\leq 1\, \Big\} \,,
\eeqn
where  
$$\|f\|_{\dot H^1}=(\int d\xi |\xi|^{2}|\widehat f(\xi)|^2)^{1/2}\,.$$
We define 
$$d\mu_j(\phi):=d\mu(\phi)\1_{\cN_j}(\phi)\,.$$
Our goal is to introduce subsets $\cM_j\subset\cN_j$, for $j\in\N_0$, such that  for initial data $\phi^{(j)}\in\cM_j$,
the blowup time $\tau(\phi^{(j)})$ for the cubic focusing NLS 
tends to zero as $j\rightarrow\infty$.

For $\phi\in\cN_j$, one observes that if $E[\phi]=\frac12\|\nabla\phi\|_{L^2}^2-\frac14\|\phi\|_{L^4}^4<0$, 
then
\eqn
    \|\phi\|_{L^4}\geq 2^{-\frac14} 2^{\frac j2} \,.
\eeqn
On the other hand, from the Gagliardo-Nirenberg inequality,
\eqn
     \|\phi\|_{L^4}\leq C \|\phi\|_{\dot H^1}^{3/4}\|\phi\|_{L^2}^{1/4} \leq C 2^{\frac{3j}{4}} \,. 
\eeqn
These are the only restrictions on the size of $\|\phi\|_{L^4}$ on $ \cN_j $.

Moreover, from the uncertainty principle
\eqn 
    \|\phi\|_{L^2}^2\leq C \|x\phi\|_{L^2} \|\phi\|_{\dot H^1} \,,
\eeqn
it follows that for $\phi\in\cN_j$,
\eqn
     \|x\phi\|_{L^2} > C 2^{-j} \,.
\eeqn
Thus, we define subsets of $\cN_j$ given by
\begin{align} 
    \cM_j &:= \Big\{\phi\in L^2(\R^3)\,\Big|\, \|\phi\|_{L^2}=1 \,, \|x\phi\|_{L^2} < b \,, \,
    2^{j-1}<\|\phi\|_{\dot H^1}\leq 2^{j} \,, 
    \|\phi\|_{L^4} > C 2^{\frac{5j}{8}}\Big\}
    \nonumber\\
    \cM_0 &:= \Big\{\phi\in L^2(\R^3)\,\Big|\, \|\phi\|_{L^2}=1\,, \|x\phi\|_{L^2} < b \,, \,
    \|\phi\|_{\dot H^1}\leq 1
    \Big\} \,,
\end{align} 
where $b>0$ is a fixed constant.
These sets are non-empty; an example of a function $f_j\in\cM_j$ is given by  
\eqn 
    f_j(x) = 2^{3j/2} g(2^{j}x) \,,
\eeqn
where $g(x)=e^{-x^2}$ is the standard Gaussian.
We define measures $\mu_j$  on $L^2(\R^3)$ satisfying
\eqn\label{eq-mujMj-def-1}
    \mu_j(\cM_j) = \kappa_r (j^{j^{1/\delta} })^{-j}
\eeqn
for $r>1$ and $\delta:=r-1$, where the constant $\kappa_r$ ensures that  $\mu:=\sum \mu_j$ is a probability measure 
on $\cS$. For instance, we can think of $\mu_j$ as the uniform measure concentrated
on $\{e^{i\theta}f_j\}_{\theta\in[0,2\pi)}$, which is invariant under multiplication by a phase.

Then,  we let 
\eqn 
    \gamma^{(k)} := \int d\mu(\phi) (|\phi\rangle\langle\phi|)^{\otimes k} 
    \;\;\; \forall k\in\N \,,
\eeqn
and obtain that 
\eqn 
    \tr(|S^{(k,1)}\gamma^{(k)}|)
    &=&\tr\Big| \, S^{(k,1)}\int d\mu(\phi) (|\phi\rangle\langle\phi|)^{\otimes k} \, \Big|
    \nonumber\\
    &=&\sum_j\int d\mu_j(\phi) \|\phi\|_{H^1}^{2k}
    \nonumber\\
    &\leq&C \sum_j (j^{j^{1/\delta} })^{-j} 2^{2jk}
    \nonumber\\
    &\leq&C e^{c k^r} \,,
    \label{eq-superexp-bd-1}
\eeqn
see Lemma \ref{lm-exponbd-1} below. Thus, $\gamma^{(k)}\in\frH^{1,r}$ for $r>1$.

On the other hand, $(\gamma^{(k)})_{k\in\N}\not\in\frH^{1,1}$.
This is because if $(\gamma^{(k)})_{k\in\N}\in\frH^{1,1}$, 
it follows from Chebyshev's inequality (similar to Lemma \ref{lm-Chebyshev-1}) that
\eqn 
    \mu\Big(\Big\{\phi\in L^2(\R^3)\Big| \|\phi\|_{H^1}>R\Big\}\Big) = 0 \,,
\eeqn
for some $R<\infty$. But this implies that
there are some constants $0<c<C<\infty$ independent of $R$, and $J>0$ 
such that $c\log R<J< C \log R$ for all $R>1$ sufficiently large,
and $\mu(\cM_j)=0$ for all $j>J$.
But then, $\mu_j(\cM_j)=0$ for all $j>J$, which contradicts \eqref{eq-mujMj-def-1}.

For $\phi\in \cM_j$, we have that 
\eqn 
    E[\phi] &=& \frac12 \|\nabla\phi\|_{L^2}^2 - \frac14 \|\phi\|_{L^4}^4
    \nonumber\\
    &<&\frac14(2^{2j}-2^{\frac52 j})
    \nonumber\\
    &<&- C 2^{\frac 52j}
\eeqn
for a constant $C>0$ independent of $j$. Therefore, by the blowup criterion 
of Vlasov-Petrishchev-Talanov \cite{vlpetal}, Zakharov \cite{zakh}, and Glassey \cite{glassey},
the solution $\phi(t)$ with initial data $\phi(0)=\phi$ blows up in finite time in $H^1$.

Next, we derive an upper bound $T_j$ on the blowup time for solutions of the 
focusing cubic NLS with initial data $\phi\in\cM_j$. 
From \eqref{eq-blowup-quadeq-1}, we obtain the quadratic inequality 
\eqn\label{eq-quadrineq-1}
    0&=& \|x\phi(0)\|_{L^2}^2 +
     2 t  \|x\phi\|_{L^2} \|\phi\|_{\dot H^1} +8t^2 E[\phi]
     \nonumber\\
     &\leq&  b^2 + 2 t b - 8 t^2 C 2^{\frac52 j} \,.
\eeqn
The positive zero $T_j>0$ of the quadratic polynomial in $t$ on the lower line
provides an upper bound on the blowup time of the solution $\phi(t)$. 
From \eqref{eq-quadrineq-1}, we get
\eqn 
    T_j < C 2^{-\frac{5j}2}
\eeqn
for a positive constant $C$ independent of $j$. 

Hence, for any $\e>0$, there exists $J=J(\e)>c |\log \e|>0$ such that 
\eqn 
      \tr\Big(\Big|S^{(k,1)}\Big[
      \sum_{j=0}^J \int d\mu_j(\phi) (|S_t(\phi) \rangle
      \langle S_t(\phi)|)^{\otimes k}\Big]
      \Big|\Big)
\eeqn
blows up in a time interval $[0,2^{-cJ})\subset[0,\e)$.
Letting $\e\rightarrow0$ so that $J\rightarrow\infty$, we obtain that 
$\tr(|S^{(k,1)}[\gamma^{(k)}(t)] |)$ blows up instantaneously.

This completes the proof of Theorem \ref{thm-main-2}.
\qed

Finally, we prove the last step in  \eqref{eq-superexp-bd-1}.
 
\begin{lemma}\label{lm-exponbd-1}
Assume that $r>1$, and let $\delta:=r-1$. Then, for $k\in\N$ sufficently large (depending
only on $\delta$),
\eqn 
    \sum_j (j^{j^{1/\delta} })^{-j} 2^{2jk} 
    \leq e^{c k^r}
\eeqn
for a finite constant $c>0$.
\end{lemma}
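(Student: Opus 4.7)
The plan is to take logarithms, optimize the summand in $j$ treated as a continuous variable, and then verify that the tail of the sum is harmless. Write
\[
   a_j(k) := (j^{j^{1/\delta}})^{-j} 2^{2jk}, \qquad
   \log a_j(k) = 2jk\log 2 - j^{\,1+1/\delta}\log j.
\]
Since $1+1/\delta = r/(r-1) \cdot \frac{r-1}{1}\cdot\ldots = r/\delta$, we in fact have $j^{1+1/\delta} = j\cdot j^{1/\delta}$, so
\[
   \log a_j(k) = j\bigl(\,2k\log 2 - j^{1/\delta}\log j\,\bigr).
\]
This identity already suggests the critical scale: the bracket changes sign when $j^{1/\delta}\log j \asymp k$, i.e.\ when $j \asymp (k/\log k)^{\delta}$.

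First I would choose a cutoff $j_0 = j_0(k)$ as the smallest integer with $j_0^{1/\delta}\log j_0 \geq 4k\log 2$. A standard inversion gives $j_0 \leq C\,(k/\log k)^{\delta} \leq C\,k^{\delta}$ for $k$ large. For $j\geq j_0$, the bracket in the display above is $\leq -\tfrac12 j^{1/\delta}\log j \leq -2k\log 2$, which yields $\log a_j(k) \leq -2jk\log 2$, so that $a_j(k) \leq 4^{-jk}$; summing the geometric tail gives
\[
   \sum_{j\geq j_0} a_j(k) \;\leq\; \sum_{j\geq 1} 4^{-j} \;\leq\; 1.
\]

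Next, for $j < j_0$ I would use the crude bound $\log a_j(k) \leq 2jk\log 2 \leq 2 j_0 k\log 2$. Since $j_0 \leq Ck^\delta$ and $\delta + 1 = r$, this gives $\log a_j(k) \leq 2Ck^{r}\log 2$. There are at most $j_0 \leq Ck^\delta \leq e^{k^r}$ such terms (for $k$ large), so
\[
   \sum_{j<j_0} a_j(k) \;\leq\; j_0 \cdot e^{2Ck^r\log 2} \;\leq\; e^{c\,k^r}
\]
for some finite $c$ depending only on $r$. Adding the two contributions yields $\sum_{j} a_j(k) \leq e^{ck^r}+1 \leq e^{c' k^r}$, as required.

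The only place where one needs to be careful is the choice of $j_0$: taking it too small loses the super-exponential decay in the tail, while taking it too large inflates the crude estimate on $[1,j_0)$ beyond $e^{ck^r}$. The sweet spot $j_0\sim (k/\log k)^\delta$ (equivalently $j_0^{1/\delta}\log j_0 \asymp k$) is forced by the location of the maximum of $\log a_j(k)$, which is where the two competing terms $2jk\log 2$ and $j\cdot j^{1/\delta}\log j$ are of the same order; at this balance both are $\asymp k^r/(\log k)^{\delta}$, safely below $k^r$. The rest of the argument is a routine sum/geometric series estimate.
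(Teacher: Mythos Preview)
Your argument is correct and essentially identical to the paper's: both split the sum at a cutoff of order $k^\delta$, bound the tail by a geometric series with ratio $\leq \tfrac12$, and bound the $O(k^\delta)$ head terms crudely by $2^{2jk}\leq 2^{2k\cdot Ck^\delta}=e^{O(k^r)}$. The only cosmetic difference is that the paper takes the cutoff to be exactly $J=k^\delta$ while you locate it more precisely at $j_0\sim (k/\log k)^\delta$; since you then discard the $\log k$ gain anyway via $j_0\leq Ck^\delta$, the two proofs are effectively the same.
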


\begin{proof}
Clearly, 
\eqn 
    \sum_j (j^{j^{1/\delta} })^{-j} 2^{2jk} =
    \sum_j \Big( \frac{2^{2k} }{j^{j^{1/\delta} }}\Big)^j \,.
\eeqn
Let $J=J(k) = k^\delta$. Then,  
$$\frac{2^{2k} }{j^{j^{1/\delta} }} < \frac{2^{2k}}{k^{\delta k}}< \frac12$$ 
for all $j>J$, 
if $k$ is large enough (depending only on $\delta$).
Therefore, 
\eqn
    \sum_{j>J} \Big( \frac{2^{2k} }{j^{j^{1/\delta} }}\Big)^j 
    <  \sum_{j>J} \Big( \frac12\Big)^j < 1 \,,
\eeqn
for $k$ sufficiently large.
On the other hand, 
\eqn
    \sum_{0\leq j\leq J} \Big( \frac{2^{2k} }{j^{j^{1/\delta}}}\Big)^j 
    \leq  \sum_{0\leq j\leq J} 2^{2kj} 
    \leq J 2^{2kJ} =  k^\delta 2^{2k^{1+\delta}} \leq e^{c k^r} \,,
\eeqn
for a suitable constant $c>0$, as claimed.
\end{proof}

\subsection*{Acknowledgements}  
We are grateful to the anonymous referees for their very useful comments.
The work of T.C. was supported by NSF grants DMS-1009448
and DMS-1151414 (CAREER). 
The work of N.P. was supported by NSF grant DMS-1101192.
The work of R.S. was supported by NSERC.


\begin{thebibliography}{99}




\bibitem{adgote}
R. Adami, G. Golse, A. Teta,
{\em Rigorous derivation of the cubic NLS in dimension one},
J. Stat. Phys. {\bf 127}, no. 6, 1194--1220 (2007).

\bibitem{ailisesoyn}
M. Aizenman, E.H. Lieb, R. Seiringer, J.P. Solovej, J. Yngvason,
{\em Bose-Einstein Quantum Phase Transition in an Optical Lattice Model},
Phys. Rev. A {\bf 70}, 023612 (2004).

\bibitem{AmmariNier-2008}
Z. Ammari, F. Nier, 
{\em Mean field limit for bosons and infinite dimensional phase-space analysis}, 
Ann. H. Poincar\'e {\bf 9}, 1503--1574 (2008).

\bibitem{AmmariNier-2011}
Z. Ammari, F. Nier, 
{\em Mean field propagation of Wigner measures and BBGKY hierarchies for general bosonic states}, 
J. Math. Pures Appl., {\bf 95}, 585--626  (2011).

\bibitem{anasig}
I. Anapolitanos,
{\em Rate of convergence towards the Hartree-von Neumann limit in the mean-field regime}, 
Lett. Math. Phys. {\bf 98} (1), 1--31 (2011). 

\bibitem{anenmawico}
M. H. Anderson, J. R. Ensher, M. R. Matthews, C. E. Wieman, and E. A. Cornell, 
{\em Observation of Bose-Einstein Condensation in a Dilute Atomic Vapor}, 
Science {\bf 269}, 198--201 (1995).

\bibitem{CHPS-1}
T. Chen, C. Hainzl, N. Pavlovi\'c, R. Seiringer,
{\em
Unconditional uniqueness  for the cubic Gross-Pitaevskii hierarchy via quantum de Finetti},
{\em Commun. Pure Appl. Math.}, to appear.  
Preprint available at http://arxiv.org/abs/1307.3168

\bibitem{chpa}
T. Chen, N. Pavlovi\'c,
{\em The quintic NLS as the mean field limit of a Boson gas with three-body interactions},
J. Funct. Anal., {\bf 260} (4), 959-997 (2011).

\bibitem{chpa2}
T. Chen, N. Pavlovi\'c,
{\em On the Cauchy problem for focusing and defocusing Gross-Pitaevskii hierarchies},
Discr. Contin. Dyn. Syst., {\bf 27} (2), 715 - 739 (2010). 

\bibitem{CPHE} 
T. Chen, N. Pavlovi\'c, 
{\em Higher order energy conservation and global wellposedness of solutions for Gross-Pitaevskii hierarchies}. 
Comm. PDE, to appear.  

\bibitem{CPBBGKY} 
T. Chen, N. Pavlovi\'c, 
{\em Derivation of the cubic NLS and Gross-Pitaevskii hierarchy from manybody dynamics in d=3 based on spacetime norms}, 
Ann. H. Poincar\'e, online first (2013).  

\bibitem{chpatz1}
T. Chen, N. Pavlovi\'c, N. Tzirakis,
{\em Energy conservation and blowup of solutions for focusing GP hierarchies},
Ann. Inst. H. Poincare (C) Anal. Non-Lin., {\bf 27} (5), 1271--1290 (2010).

\bibitem{chpatz2}
T. Chen, N. Pavlovi\'c, N. Tzirakis,
{\em Multilinear Morawetz identities for the Gross-Pitaevskii hierarchy},
Contemp. Math., {\bf 581}, 39 - 62, 2012.

\bibitem{xch3}
X. Chen,
{\em
On the Rigorous Derivation of the 3D Cubic Nonlinear Schr\"odinger Equation with A Quadratic Trap}.
Arch. Ration. Mech. Anal., to appear.   http://arxiv.org/abs/1204.0125

\bibitem{CheHol-2013}
X. Chen, J. Holmer,
{\em On the Klainerman-Machedon Conjecture of the Quantum BBGKY Hierarchy with Self-interaction}, Preprint.
http://arxiv.org/abs/1303.5385

\bibitem{dameandrdukuke}
K. B. Davis, M. -O. Mewes, M. R. Andrews, N. J. van Druten, D. S. Durfee, D. M. Kurn, and W. Ketterle, 
{\em Bose-Einstein condensation in a gas of sodium atoms}, 
Phys. Rev. Lett. {\bf 75}, 3969--3973 (1995).

\bibitem{DuyHolRou}
T. Duyckaerts, J. Holmer, S. Roudenko, 
{\em Scattering for the non-radial 3D cubic nonlinear Schrödinger equation}, Math. Res. Lett. {\bf 15} (6), 1233--1250 (2008).

\bibitem{GreSohSta-2012}
P. Gressman, V. Sohinger, G. Staffilani,
{\em On the uniqueness of solutions to the periodic 3D Gross-Pitaevskii hierarchy}, 
J. Funct. Anal. {\bf 266} (7), 4705--4764 (2014). 

\bibitem{HewittSavage}
E. Hewitt, L. J. Savage, 
{\em Symmetric measures on Cartesian products}, Trans. Amer. Math. Soc., {\bf 80},  470--501 (1955).

\bibitem{HudsonMoody}
R. L. Hudson, G. R. Moody, 
{\em Locally normal symmetric states and an analogue of de Finetti’s theorem}, 
Z. Wahrscheinlichkeitstheorie und Verw. Gebiete, {\bf 33}, 343--351 (1975/76).

\bibitem{esy1}
L. Erd\"os, B. Schlein, H.-T. Yau,
{\em Derivation of the Gross-Pitaevskii hierarchy for the dynamics of Bose-Einstein condensate}, 
Comm. Pure Appl. Math. {\bf 59} (12), 1659--1741 (2006).

\bibitem{esy2}
L. Erd\"os, B. Schlein, H.-T. Yau,
{\em Derivation of the cubic non-linear Schr\"odinger equation
from quantum dynamics of many-body systems}, 
Invent. Math. {\bf 167} (2007), 515--614.

\bibitem{esy3}
L. Erd\"os, B. Schlein, and H.-T. Yau. 
{\em Rigorous derivation of the Gross-Pitaevskii equation with a large interaction potential.} 
J. Amer. Math. Soc. {\bf 22} (4), 1099--1156 (2009).

\bibitem{esy4}
L. Erd\"os, B. Schlein, and H.-T. Yau. 
{\em Derivation of the Gross-Pitaevskii equation for the dynamics of Bose-Einstein condensates.} 
Ann. of Math. (2) {\bf 172} (1), 291--370 (2010).

\bibitem{ey}
L. Erd\"os, H.-T. Yau,
{\em Derivation of the nonlinear Schr\"odinger equation from a many body Coulomb system}, 
Adv. Theor. Math. Phys. {\bf 5}, no. 6, 1169--1205  (2001).

\bibitem{frgrsc}
J. Fr\"ohlich, S. Graffi, S. Schwarz,
{\em Mean-field- and classical limit of many-body Schr\"odinger dynamics for bosons},
Comm. Math. Phys. {\bf 271}, no. 3, 681--697 (2007).

\bibitem{frknpi}
J. Fr\"ohlich, A. Knowles, A. Pizzo,
{\em Atomism and quantization}, 
J. Phys. A {\bf 40}, no. 12, 3033--3045 (2007).

\bibitem{frknsc}
J. Fr\"ohlich, A. Knowles, S. Schwarz
{\em On the Mean-Field Limit of Bosons with Coulomb Two-Body Interaction},
Comm. Math. Phys. {\bf 288} (3), 1023--1059 (2009). 

\bibitem{frtsya}
J. Fr\"ohlich, T.-P. Tsai, H.-T. Yau, 
{\em On a classical limit of quantum theory and the non-linear Hartree equation}, 
GAFA 2000 (Tel Aviv, 1999). Geom. Funct. Anal., Special Volume, Part I, 57–78 (2000). 

\bibitem{glassey}
R. Glassey, 
{\em On the blowing up of solutions to the Cauchy problem for nonlinear Schr\"odinger equation}, 
J. Math. Phys., {\bf 18} (9) 1794--1797  (1977).


\bibitem{grmama}
M. Grillakis, M. Machedon, A. Margetis,
{\em Second-order corrections to mean field evolution for weakly interacting Bosons. I},
Comm. Math. Phys. {\bf 294} (1), 273--301  (2010).

\bibitem{grma}
M. Grillakis, A. Margetis,
{\em A priori estimates for many-body Hamiltonian evolution of interacting boson system}, 
J. Hyperbolic Differ. Equ. {\bf 5} (4), 857--883 (2008). 

\bibitem{he}
K. Hepp,
{\em The classical limit for quantum mechanical correlation functions}, 
Comm. Math. Phys. {\bf 35}, 265--277 (1974).

\bibitem{kiscst}
K. Kirkpatrick, B. Schlein, G. Staffilani,
{\em Derivation of the two dimensional nonlinear Schr\"odinger equation from many body quantum dynamics},
Amer. J. Math.  {\bf 133} (1), 91–130 (2011).

\bibitem{KM}
S. Klainerman, M. Machedon, 
{\em On the uniqueness of solutions to the Gross-Pitaevskii hierarchy}, 
Commun. Math. Phys. {\bf 279}, no. 1, 169--185  (2008).

\bibitem{Lan-1}
O.E. Lanford, 
{\em The classical mechanics of one-dimensional systems of infinitely many particles}, 
Commun. Math. Phys. {\bf 9}, 176-191 (1968), and {\bf 11}, 257-292 (1969).

\bibitem{Lan-2}
O.E. Lanford, 
{\em Time evolution of large classical systems}, 
in Dynamical Systems, Theory and Applications, Lecture Notes in Physics, vol. 38, (J. MOSER, editor),
Springer-Verlag, Berlin Heidelberg New York, 1974.

\bibitem{lnr} 
M. Lewin, P.T. Nam, N. Rougerie, {\it Derivation of Hartree's theory for generic mean-field Bose systems}, preprint arXiv:1303.0981

\bibitem{lilo}
E.H Lieb, M. Loss, {\em Analysis}, Graduate Studies in Mathematics {\bf 14}, AMS (2001).

\bibitem{lise}
E.H. Lieb, R. Seiringer, 
{\em Proof of Bose-Einstein condensation for dilute trapped gases},
Phys. Rev. Lett. {\bf 88}, 170409 (2002).

\bibitem{lisesoyn}
E.H. Lieb, R. Seiringer, J.P. Solovej, J. Yngvason,
{\em The mathematics of the Bose gas and its condensation},
Birkh\"auser (2005).

\bibitem{liseyn}
E.H. Lieb, R. Seiringer, J. Yngvason,  {\it Bosons in a trap: A rigorous
derivation of the Gross-Pitaevskii energy functional}, Phys.\ Rev.\ A {\bf 61}, 043602-1--13 (2000).

\bibitem{liseyn2}
E.H. Lieb, R. Seiringer, J. Yngvason,
{\em A rigorous derivation of the Gross-Pitaevskii energy
functional for a two-dimensional Bose gas}, 
Comm. Math. Phys. {\bf 224} (2001).  

\bibitem{pick}
P. Pickl, 
{\em Derivation of the time dependent Gross-Pitaevskii equation without positivity condition on the interaction},
J. Stat. Phys. {\bf 140} (1), 76--89  (2010).

\bibitem{pick2}
P. Pickl, 
{\em A simple derivation of mean field limits for quantum systems},
Lett. Math. Phys.,  {\bf 97} (2), 151 -- 164 (2011).

\bibitem{rosc}
I. Rodnianski, B. Schlein,
{\em Quantum fluctuations and rate of convergence towards mean field dynamics}, 
 Comm. Math. Phys. {\bf 291} (1), 31--61(2009).
 
\bibitem{sp}
H. Spohn,
{\em Kinetic Equations from Hamiltonian Dynamics},
Rev. Mod. Phys. {\bf 52}, no. 3, 569--615  (1980).

\bibitem{Spohn81} 
H. Spohn, {\it On the Vlasov Hierachy}, Math. Meth. in the Appl. Sci., {\bf 3}, 445--455 (1981) .

\bibitem{Stormer-69}
E. Stormer, 
{\em Symmetric states of infinite tensor products of C*-algebras}, J. Functional
Analysis, {\bf 3},  48--68 (1969).

\bibitem{CT1} 
K. Taliaferro, T. Chen, 
{\em Positive Semidefiniteness and Global Well-Posedness of Solutions to the Gross-Pitaevskii Hierarchy},
Preprint. http://arxiv.org/abs/1305.1404

\bibitem{TaoBook} T. Tao,
{\em Nonlinear dispersive equations. Local and global analysis}, 
CBMS 106, eds: AMS, 2006.

\bibitem{vlpetal}
S.N. Vlasov, V.A. Petrishchev, and V.I. Talanov,
{\em  Averaged description of wave beams in linear and nonlinear media (the method of moments)}, Radiophysics and Quantum Electronics {\bf 14} (1971) pp. 1062–1070. Translated from Izvestiya Vysshikh Uchebnykh Zavedenii, Radiofizika, {\bf 14} (1971) pp. 1353--1363.

\bibitem{zxie}
Z. Xie,
{\em Uniqueness of Gross Pitaevskii (GP) solution on 1D and 2D nonlinear Schr\"{o}dinger equation},
Preprint, arXiv:1305.7240

\bibitem{zakh}
V.E. Zakharov, 
{\em Collapse of Langmuir waves}, Soviet Physics JETP (translation of the Journal of Experimental and Theoretical Physics of the Academy of Sciences of the USSR), {\bf 35} (1972) pp. 908--914.


\end{thebibliography}
\end{document}